\def\BibTeX{{\rm B\kern-.05em{\sc i\kern-.025em b}\kern-.08em
    T\kern-.1667em\lower.7ex\hbox{E}\kern-.125emX}}
\def\th@theoremstyle{\thm@headfont{\bfseries}}
\theoremstyle{theoremstyle}  
\newtheorem{theorem}{Theorem} 
\newtheorem{lemma}[theorem]{Lemma} 
\newtheorem{definition}[theorem]{Definition}
\begin{document}

\title{Classical-Quantum Channel Resolvability Using\\ Matrix Multiplicative Weight Update Algorithm} 


\author{\IEEEauthorblockN{Koki Takahashi}
\IEEEauthorblockA{Department of Electrical Engineering and Computer Science \\
\textit{Tokyo University of Agriculture and Technology}\\
Tokyo, Japan \\
s251249z@st.go.tuat.ac.jp}
\and
\IEEEauthorblockN{Shun Watanabe}
\IEEEauthorblockA{Department of Electrical Engineering and Computer Science \\
\textit{Tokyo University of Agriculture and Technology}\\
Tokyo, Japan \\
shunwata@cc.tuat.ac.jp}
}

\maketitle

\begin{abstract}
We study classical-quantum (C-Q) channel resolvability. C-Q channel resolvability has been proved by only random coding in the literature. In our previous study, we proved channel resolvability by deterministic coding, using multiplicative weight update algorithm. We extend this approach to C-Q channels and prove C-Q channel resolvability by deterministic coding, using the matrix multiplicative weight update algorithm. This is the first approach to C-Q channel resolvability using deterministic coding.
\end{abstract}

\begin{IEEEkeywords}
classical-quantum channel resolvability, MMWU algorithm, Deterministic coding, quantum soft covering
\end{IEEEkeywords}





\section{Introduction}
Shannon introduced random coding to prove channel coding \cite{Shannon48}. Random coding is the most widely used approach in information theory. Feinstein also proved channel coding by maximal code construction \cite{Feinstein59}. 
Feinstein's deterministic coding approach has intensified our understanding of the channel coding as seen in \cite{Ash65, Blackwell_Breiman_Thomasian59, Csiszár96, Han03_Information-Spectrum, Ogawa_Nagaoka07, Winter99}.

Han and Verdu introduced channel resolvability \cite{Han_Verdu93}, which plays a key role in proving the strong converse for identification via channels \cite{Alswede_Dueck89}.  Ahlswede 
also presented an alternative proof of its strong converse later \cite{Ahlswede06}.

Ahlswede and Winter proved strong converse for identification via classical-quantum (C-Q) channels \cite{Ahlswede_Winter02} by implicitly proving the achievability of C-Q channel resolvability.  The result in \cite{Ahlswede_Winter02} has been used to prove the wiretap capacity of quantum channels \cite{Devetak_05, Cai_Winter_Yeung_04} and the quantum capacity of quantum channels \cite{Devetak_05}.
Thus, C-Q channel resolvability can play an important role in studying quantum information theory. 



To the best of our knowledge, C-Q channel resolvability has been constructed by only random coding.   
Our previous study showed that, for channel resolvability of classical channels, deterministic coding constructed by the multiplicative weight update (MWU) algorithm achieves the channel capacity \cite{takahashi2025channelresolvabilityusingmultiplicative}. In this paper, we extend the result in \cite{takahashi2025channelresolvabilityusingmultiplicative} to C-Q channels.
More specifically, we show that, for C-Q channel resolvability, deterministic coding constructed by the matrix MWU (MMWU) algorithm achieves the Holevo capacity.

The MMWU algorithm is a widely used algorithm for game theory and learning theory in quantum computing, and it approximately finds a certain minmax value of a two-player game \cite{Arora_Hazan_Kale12}. By interpreting C-Q channel resolvability as a two-player game, we apply the MMWU algorithm. The key point is that Ahlswede and Winter used quantum hypergraph soft covering to prove C-Q channel resolvability. 
Instead of the quantum Chernoff bound used in \cite{Ahlswede_Winter02}, we apply the MMWU algorithm to solve quantum hypergraph soft covering.
Our proof is inspired by Kale's thesis that the MMWU algorithm solves quantum hypergraph covering \cite{Kale07}; however, there are technical difficulties inherent in soft covering.

In recent study, refined analysis of C-Q channel resolvability and soft covering for random coding has been studied in \cite{Shen_Gao_Cheng24, Cheng_Gao24, He_Atif_Pradhan_2024, Hayashi_Cheng_Gao_25, Anshu_Hayashi_Warsi_20,Cheng_Gao_Hirche_Huang_Liu_2025, Atif_Pradhan_Winter_24, Matsuura_Hayashi_Hsieh_2025}. In addition, universal C-Q channel resolvability has been proposed in \cite{Matsuura_Hayashi_Hsieh_2025}; however the expander code used in \cite{Matsuura_Hayashi_Hsieh_2025} is constructed by random coding. 

\textit{Notation:}
Throughout the paper, $e$ is Napier's constant and the base of the natural logarithm is $e$. 
Finite alphabet sets (e.g. $\mathcal{X}$),
Hilbert space (e.g. $\mathcal{H}$) and set of density matrices of given Hilbert space (e.g. $\mathcal{S(H)}$) are denoted by calligraphic letters. 
The cardinality and dimension are denoted by $\mathcal{|X|}, \dim \mathcal{H}$ respectively. The set of all distributions on $\mathcal{X}$ is denoted by $\mathcal{P(X)}$. For a Hermitian matrix $A$, the generalize inverse $A^{-1}$ is defined such that the eigenvalues are inverse of nonzero eigenvalues of $A$; see also \cite[Definition 3.3.1]{Wilde_2018}. 
\section{Problem formulation of C-Q channel resolvability }

Let $W: \mathcal{X} \rightarrow \mathcal{S(H)}$ be a C-Q channel that maps each input $x$ to a quantum state $W_x$ on the output Hilbert space $\mathcal{H}$. Given an input distribution $p \in {\cal P(X)},$ the goal is to simulate the output quantum state
$$W_p := \sum_{{x}\in{\cal X}}p(x) W_{x}.$$ 
Let ${\cal C} = \{{x}_1, {x}_2, ...,x_{L}\}$ be a codebook of size $L$. Define $W_{{\cal C}}$ as the output mixed state induced by selecting a codeword uniformly at random from codebook ${\cal C}$ and passing the codeword through the C-Q channel, i.e.,
$$W_{{\cal C}}:=\frac{1}{L}\sum_{l=1}^L W_{x_l}.$$
The problem is how to construct a codebook ${\cal C}$ such that the output state $W_\mathcal{C}$ simulates the output state $W_p$.
A simulation error is measured by the trace distance: 
$$d_\mathrm{tr} (W_p, W_{{\cal C}}) := \frac{1}{2}\|W_p -W_{{\cal C}}\|_1,$$
where $\|A\|_1$ is trace norm of matrix $A$.
We shall make a codebook ${\cal C}$ such that, for an arbitrary small $\xi$, the trace distance $d_\mathrm{tr}(W_{p}, W_{{\cal C}})$ is less than $\xi,$ with codebook size $L$ as small as possible.

For the asymptotic setting of discrete memoryless C-Q channel $W$ and the worst input distribution, the optimal rate for C-Q channel resolvability is given by the Holevo capacity $C(W)$. Our paper achieves this result using deterministic coding in Theorem \ref{the:quantum_channel_resolvability_}.  


\section{Matrix multiplicative weight algorithm}

In this section, we review the MMWU algorithm in notations that are compatible with
C-Q channel resolvability.
We consider an $L$ rounds game between ${\cal X}$-player and ${\cal H}$-player with a quantum system;
${\cal X}$-player determines the measurement "cost matrix" $0\preceq M(x)\preceq I$ indexed by  $x \in {\cal X}$ and ${\cal H}$-player inputs state $\ket{\psi}\in {\cal H}$ into the measuring system (mixed state is also possible).
After their action, the cost $\bra{\psi}M(x)\ket{\psi}$ is incurred.
Consider ${\cal X}$-player seeks to maximize the cost while ${\cal H}$-player seeks to minimize the cost.
In each round $1 \le l \le L$, ${\cal H}$-player inputs a (mixed) state, i.e., 
a density matrix $F(l)$ on $\mathcal{S(H)}$; then ${\cal X}$-player responds cost matrix $M(x_l)$ with pure
strategy $x_l \in {\cal X}$. Our goal is to construct a sequence of mixed states
$F(1),F(2), \ldots, F(L)$ such that the total cost $\sum_{l=1}^L \Big(\mathrm{Tr}\Big( F(l) M(x_l) \Big)\Big)$
is not too much more than the cost of ${\cal H}$-player's best pure state $\ket{\psi}\in\mathcal{H}$ in hindsight, i.e., $\min_{\ket{\psi}\in {\cal H}} \bra{\psi} \sum_{l=1}^L  M(x_l) \ket{\psi}.$

The MMWU algorithm is an iterative algorithm that approximately obtains each player's best mixed strategies. The MMWU algorithm has a few variations, and we use so-called Hedge algorithm \cite{Arora_Hazan_Kale12}. 
The MMWU algorithm reviewed in Algorithm \ref{algo:MMWU} is presented in \cite{Kale07}, and its performance is evaluated in the following lemma. 
For readers' convenience, we provide a proof of the lemma in Appendix \ref{Proof_Lemma2}; see also \cite[Theorem 3.2]  {Arora_Hazan_Kale12}


\begin{algorithm}[!t]
 \caption{Matrix Multiplicative Weights Update algorithm}
 \label{algo:MMWU}
 \begin{algorithmic}
    \STATE
    Fix $\varepsilon \in(0, \frac{1}{2}),$ and initialize the weight matrix $G(1)=I$.
    \STATE
    For each step $l=1,2,..., L$:
        \STATE 1. ${\cal H}$-player selects mixed state as follows: $F(l) = \frac{G(l)}{\mathrm{Tr}(G(l))}$
        \STATE
        2. ${\cal X}$-player selects pure strategy $x_l \in {\cal X}$.
        \STATE 3. For the cost matrix $M(x_l)$, update the weight matrix:
        \makebox[\linewidth][c]{ $\displaystyle
        G(l+1) = \exp\bigg(-\varepsilon \sum_{l' = 1}^{l} M(x_{l'})\bigg) $}
 \end{algorithmic}
\end{algorithm}

\begin{lemma}(MMWU algorithm performance)  
\label{lem:MMWU_Algorithm}
Fix an arbitrary $\varepsilon\in (0, \frac{1}{2})$.
The mixed states $F(1),F(2),\ldots,F(L)$ which ${\cal H}$-player obtains
by Algorithm \ref{algo:MMWU} satisfies that for every state $\ket{\psi}\in \mathcal{H}$, 
\begin{align*}
        &(1-\varepsilon) \sum_{l=1}^L\bigg(\mathrm{Tr} \Big(F(l)M(x_l)\Big) \bigg) 
        \\&\quad\leq 
        \sum_{l=1}^L \bra{\psi} M(x_l) \ket{\psi} + \frac{\ln{\dim (\mathcal{H})}}{\varepsilon}.
\end{align*}
\end{lemma}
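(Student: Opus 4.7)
The plan is to run the standard potential-function argument for multiplicative weights, with the non-commutativity of the cost matrices handled by Golden--Thompson. Define the scalar potential
\[
\Phi(l) := \mathrm{Tr}\bigl(G(l)\bigr), \qquad l = 1, 2, \ldots, L+1,
\]
so that $\Phi(1) = \dim(\mathcal{H})$ and $F(l) = G(l)/\Phi(l)$. The strategy is to sandwich $\Phi(L+1)$ between an upper bound involving the algorithm's accumulated cost $\sum_l \mathrm{Tr}(F(l) M(x_l))$ and a lower bound involving any fixed pure state $\ket{\psi}$, and then rearrange.

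For the upper bound, I would rewrite $G(l+1) = \exp\bigl(\ln G(l) - \varepsilon M(x_l)\bigr)$ and apply the Golden--Thompson inequality $\mathrm{Tr}\,e^{A+B} \le \mathrm{Tr}(e^A e^B)$ to obtain
\[
\Phi(l+1) \le \mathrm{Tr}\bigl(G(l)\, e^{-\varepsilon M(x_l)}\bigr).
\]
Then use the operator inequality $e^{-\varepsilon M} \preceq I - (1-e^{-\varepsilon}) M$ valid for $0 \preceq M \preceq I$, which follows by applying the scalar chord bound $e^{-\varepsilon x} \le 1 - (1-e^{-\varepsilon}) x$ on $[0,1]$ eigenvalue-wise to $M$ (here convexity of $e^{-\varepsilon x}$ suffices and no non-commutativity intrudes). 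This gives $\Phi(l+1) \le \Phi(l)\bigl(1 - (1-e^{-\varepsilon}) \mathrm{Tr}(F(l)M(x_l))\bigr)$, which after the elementary bound $1-y \le e^{-y}$ telescopes to
\[
\Phi(L+1) \le \dim(\mathcal{H})\,\exp\Bigl(-(1-e^{-\varepsilon}) \sum_{l=1}^L \mathrm{Tr}\bigl(F(l) M(x_l)\bigr)\Bigr).
\]
For the lower bound, write $S_L := \sum_{l=1}^L M(x_l)$ and note that
\[
\Phi(L+1) = \mathrm{Tr}\, e^{-\varepsilon S_L} \ge \bra{\psi}\,e^{-\varepsilon S_L}\,\ket{\psi} \ge \exp\bigl(-\varepsilon \bra{\psi} S_L \ket{\psi}\bigr),
\]
where the last inequality is Jensen applied to the convex exponential, using that $\{\lvert \braket{\psi|v_i} \rvert^2\}$ is a probability distribution over the eigenvalues $\lambda_i$ of $S_L$.

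Combining the two bounds, taking logarithms, dividing by $\varepsilon$, and using the elementary inequality $(1-e^{-\varepsilon})/\varepsilon \ge 1-\varepsilon$ (valid for $\varepsilon \in (0,\tfrac{1}{2})$ by a Taylor comparison) yields the claimed inequality. The main obstacle, and the only place where the proof goes beyond the classical Hedge argument, is the Golden--Thompson step: since the incoming cost matrix $M(x_l)$ need not commute with the cumulative sum $\sum_{l'<l}M(x_{l'})$ sitting in the exponent of $G(l)$, one cannot naively factor the exponential. Golden--Thompson lets us replace $\mathrm{Tr}\exp(A+B)$ by $\mathrm{Tr}(e^A e^B)$ while preserving the inequality, after which the argument mirrors the scalar case. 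Everything else, including the chord bound on $e^{-\varepsilon M}$ and the Jensen lower bound, reduces to single-matrix spectral arguments.
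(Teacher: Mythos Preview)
Your proposal is correct and follows essentially the same Golden--Thompson potential argument as the paper's Appendix~\ref{Proof_Lemma2}: the upper bound via Golden--Thompson, the operator chord bound $e^{-\varepsilon M}\preceq I-(1-e^{-\varepsilon})M$, the telescoping with $1-y\le e^{-y}$, and the final use of $(1-e^{-\varepsilon})/\varepsilon\ge 1-\varepsilon$ are all identical. The only cosmetic difference is in the lower bound on $\Phi(L+1)$: the paper bounds $\mathrm{Tr}\,e^{-\varepsilon S_L}\ge e^{-\varepsilon\lambda_i(S_L)}$ via eigenvalues and then invokes $\min_i\lambda_i\le\bra{\psi}S_L\ket{\psi}$, whereas you go directly through $\bra{\psi}e^{-\varepsilon S_L}\ket{\psi}$ and Jensen, which is slightly slicker but equivalent.
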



\section{Quantum hypergraph soft covering}
In this section, as a tool to analyze C-Q channel resolvability, we consider the quantum hypergraph soft covering \cite{Ahlswede_Winter02}. A quantum hypergraph soft covering $(\mathcal{V,E})$ consists of a finite-dimensional space $\mathcal{V}$ and a collection of edges $\mathcal{E}$. 
Each edge $E_x\in\mathcal{E}$ is Hermitian matrix  and satisfies $0\preceq E_x \preceq \mathbf{1}_{\mathcal{V}}$ for the identity matrix $\mathbf{1}_{\mathcal{V}}$ on space $\mathcal{V}$. 
The problem of hypergraph soft covering $(\mathcal{V,E})$ is to simulate Hermitian matrix $ E_p = \sum_{E_x\in\mathcal{E}} p(x) E_x,$
by making codebook $\mathcal{C} = \{x_1, x_2, ..., x_L\}$ such that $E_{\mathcal{C}} = \frac{1}{L} \sum_{l=1}^L E_{x_l}.$ 

We will make assumptions about Hermitian matrices $E_x$ and $E_p$ such that for an arbitrary positive constant $\eta,\tau$ and identity matrix $\mathbf{1}_\mathcal{V}$,
\begin{align}\label{eq:assumption_E_x}
    E_x \preceq \eta \mathbf{1}_{\mathcal{V}} \quad \forall E_x \in \mathcal{E},
\end{align}
\begin{align}\label{eq:assumption_Tr_rho_pi}
    \mathrm{Tr}(E_p) \geq 1 - \tau.
\end{align}
These assumptions will be satisfied in Section \ref{sec:Classical-Quantum_channel_resolvability}.

By defining the cost matrix $M(x)$ and $\mathcal{X}$-player's pure strategies $x_1,x_2, \ldots, x_L$ appropriately, we will run the MMWU algorithm, and we obtain a codebook $\mathcal{C}$ from $\mathcal{X}$-player's strategies. Under the assumption \eqref{eq:assumption_E_x} and \eqref{eq:assumption_Tr_rho_pi}, we solves quantum hypergraph soft covering in Theorem \ref{the:Single-shot_for_non-regular_MMWU}. 

\begin{theorem}(Quantum hypergraph soft covering) 
    \label{the:Single-shot_for_non-regular_MMWU}
  For arbitrary positive constants $\varepsilon, \tau\in(0,\frac{1}{2}),  \tau_0\in(0, \frac{1}{2}-\tau), \eta$, the Hermitian matrices $E_p: \mathrm{Tr}(E_p)\geq 1-\tau$ and $E_x \preceq \eta \mathbf{1}_\mathcal{V} \;\forall E_x\in\mathcal{E} $,
    the codebook ${\mathcal{C}}$ obtained by the MMWU algorithm satisfies
    \begin{align}\label{eq:the:Single-shot_for_non-regular_MMWU}
        d_\mathrm{tr}(E_p, E_{\mathcal{C}}) \leq  3\varepsilon + 3\tau+ \frac{7}{2}\tau_0 + \sqrt{2\varepsilon +\tau + \tau_0 },
    \end{align}
    as long as the size $L$ satisfies that 
    \begin{equation} \label{eq:the:Single-shot_for_non-regular_MMWU_L_size}
        L  \geq  \frac{\eta \dim(\mathcal{V}) \ln{\dim(\mathcal{V}) }}{\varepsilon^2 \tau_0}.
    \end{equation}
\end{theorem}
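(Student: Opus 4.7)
The plan is to interpret the quantum hypergraph soft covering problem as an MMWU game on a suitable typical subspace of $\mathcal{V}$ and then apply Lemma~\ref{lem:MMWU_Algorithm}. Let $\Pi_{\tau_0}$ denote the spectral projector of $E_p$ onto its eigenspaces with eigenvalues at least $\tau_0/\dim(\mathcal{V})$, and set $\tilde{E}_p := \Pi_{\tau_0} E_p \Pi_{\tau_0}$. By construction $\tilde{E}_p \succeq (\tau_0/\dim(\mathcal{V}))\,\Pi_{\tau_0}$, and $\|E_p - \tilde{E}_p\|_1 = \mathrm{Tr}((\mathbf{1}_\mathcal{V}-\Pi_{\tau_0})E_p) \leq \tau_0$. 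On the subspace $\Pi_{\tau_0}\mathcal{V}$, I would run Algorithm~\ref{algo:MMWU} with the sandwiched cost matrices
\[
  M(x) := \frac{\tau_0}{\eta\,\dim(\mathcal{V})}\,\tilde{E}_p^{-1/2}\,(\Pi_{\tau_0} E_x \Pi_{\tau_0})\,\tilde{E}_p^{-1/2},
\]
which satisfy $0 \preceq M(x) \preceq \Pi_{\tau_0}$ thanks to assumption~\eqref{eq:assumption_E_x} and the bound $\tilde{E}_p^{-1} \preceq (\dim(\mathcal{V})/\tau_0)\,\Pi_{\tau_0}$. The $\mathcal{X}$-player's strategy is $x_l := \arg\max_{x\in\mathcal{X}} \mathrm{Tr}(F(l)\,M(x))$; since $\sum_x p(x) M(x) = (\tau_0/(\eta\,\dim(\mathcal{V})))\,\Pi_{\tau_0}$ and $\mathrm{Tr}(F(l)\,\Pi_{\tau_0}) = 1$, this choice satisfies $\mathrm{Tr}(F(l)\,M(x_l)) \geq \tau_0/(\eta\,\dim(\mathcal{V}))$.

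Applying Lemma~\ref{lem:MMWU_Algorithm} with any unit vector $\ket{\psi}\in\Pi_{\tau_0}\mathcal{V}$, using $\dim(\Pi_{\tau_0}\mathcal{V}) \leq \dim(\mathcal{V})$, and rearranging the factors of $\tau_0/(\eta\,\dim(\mathcal{V}))$, I obtain
\[
  \bra{\psi}\,\tilde{E}_p^{-1/2} E_\mathcal{C}\, \tilde{E}_p^{-1/2}\,\ket{\psi} \;\geq\; (1-\varepsilon) - \frac{\eta\,\dim(\mathcal{V})\,\ln\dim(\mathcal{V})}{L\,\tau_0\,\varepsilon}.
\]
The size hypothesis~\eqref{eq:the:Single-shot_for_non-regular_MMWU_L_size} forces the second term to be at most $\varepsilon$. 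Since $\ket{\psi}$ was arbitrary, this gives the operator inequality $\tilde{E}_p^{-1/2} E_\mathcal{C} \tilde{E}_p^{-1/2} \succeq (1-2\varepsilon)\,\Pi_{\tau_0}$ on $\Pi_{\tau_0}\mathcal{V}$, equivalently $\Pi_{\tau_0} E_\mathcal{C} \Pi_{\tau_0} \succeq (1-2\varepsilon)\,\tilde{E}_p$.

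To finish, I would bound the trace distance via the triangle decomposition
\[
  \|E_p - E_\mathcal{C}\|_1 \leq \|E_p - \tilde{E}_p\|_1 + \|\tilde{E}_p - \Pi_{\tau_0} E_\mathcal{C} \Pi_{\tau_0}\|_1 + \|\Pi_{\tau_0} E_\mathcal{C} \Pi_{\tau_0} - E_\mathcal{C}\|_1.
\]
The first term is at most $\tau_0$; the negative part of the second is bounded by $2\varepsilon\,\mathrm{Tr}(\tilde{E}_p) \leq 2\varepsilon$ via the operator inequality; and the difference of traces (picking up the $\tau$ from assumption~\eqref{eq:assumption_Tr_rho_pi}) controls the positive part. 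The main obstacle I foresee is the third term, since the $\arg\max$ strategy does not by itself control $\mathrm{Tr}((\mathbf{1}_\mathcal{V}-\Pi_{\tau_0})E_{x_l})$ in each round. I would address this by supplementing the $\mathcal{X}$-player's choice with a simultaneous trace-control requirement, whose feasibility at every round follows from a Markov-type argument under $p$ using $\mathbb{E}_{x\sim p}[\mathrm{Tr}((\mathbf{1}_\mathcal{V}-\Pi_{\tau_0})E_x)] \leq \tau_0$ together with the fact that the MMWU-feasible $x$'s carry enough $p$-mass. A gentle-measurement inequality then converts the resulting trace deficit into a trace-norm bound of the form $\sqrt{2\varepsilon+\tau+\tau_0}$, which is the source of the square-root contribution. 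Accumulating linear slacks from the three triangle terms and from the Markov step yields the stated constants $3\varepsilon + 3\tau + \tfrac{7}{2}\tau_0$.
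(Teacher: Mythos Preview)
Your setup matches the paper exactly: same spectral projector $\Pi_{\tau_0}$ (the paper calls it $\Pi_1$), same sandwiched cost matrix (the paper normalizes by $e^{-D_{\max}}$ rather than by $\tau_0/(\eta\dim\mathcal V)$, but these agree up to the bound in your display), same $\arg\max$ rule, and the same key conclusion $\Pi_{\tau_0}E_{\mathcal C}\Pi_{\tau_0}\succeq(1-2\varepsilon)\tilde E_p$ under the size hypothesis~\eqref{eq:the:Single-shot_for_non-regular_MMWU_L_size}. The triangle decomposition is also the same.

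The gap is in your treatment of the third term $\|\Pi_{\tau_0}E_{\mathcal C}\Pi_{\tau_0}-E_{\mathcal C}\|_1$. Your proposed ``simultaneous trace-control'' via Markov does not work quantitatively: the set $S=\{x:\mathrm{Tr}((\mathbf 1-\Pi_{\tau_0})E_x)\le c\}$ has $p(S)\ge 1-\tau_0/c$, and the residual average over $S$ of $\mathrm{Tr}(F(l)M(x))$ is at least $\tau_0/(\eta\dim\mathcal V)-\tau_0/c$ (since $0\preceq M(x)\preceq I$). For this to stay comparable to $\tau_0/(\eta\dim\mathcal V)$ you need $c\gg \eta\dim\mathcal V$, which makes the per-round trace control vacuous. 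So the feasibility you assert does not yield a useful bound, and there is no evident way to rescue it by tuning $c$.

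The point you are missing is that no modification of the $\mathcal X$-player is needed at all. Take the trace of the operator inequality you already proved: $\mathrm{Tr}(\Pi_{\tau_0}E_{\mathcal C}\Pi_{\tau_0})\ge(1-2\varepsilon)\,\mathrm{Tr}(\tilde E_p)\ge 1-2\varepsilon-\tau-\tau_0$. Then the gentle measurement lemma applied to $E_{\mathcal C}$ and the projector $\Pi_{\tau_0}$ gives $\|E_{\mathcal C}-\Pi_{\tau_0}E_{\mathcal C}\Pi_{\tau_0}\|_1\le 2\sqrt{2\varepsilon+\tau+\tau_0}$ directly; this is exactly how the paper obtains the square-root term. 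For the middle piece $\|\tilde E_p-\Pi_{\tau_0}E_{\mathcal C}\Pi_{\tau_0}\|_1$ the paper does not use your positive/negative-part shortcut but instead passes to the normalized states $\sigma_{E_p^{\Pi_1}}=\tilde E_p/\mathrm{Tr}(\tilde E_p)$ and $\sigma_{E_{\mathcal C}^{\Pi_1}}=\Pi_{\tau_0}E_{\mathcal C}\Pi_{\tau_0}/\mathrm{Tr}(\Pi_{\tau_0}E_{\mathcal C}\Pi_{\tau_0})$, bounds $\|\sigma_{E_p^{\Pi_1}}-\sigma_{E_{\mathcal C}^{\Pi_1}}\|_1=2\,\mathrm{Tr}(\sigma_{E_p^{\Pi_1}}-\sigma_{E_{\mathcal C}^{\Pi_1}})_+\le 4(\varepsilon+\tau+\tau_0)$ via the operator inequality, and adds two normalization errors of size $\tau+\tau_0$ and $2\varepsilon+\tau+\tau_0$. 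Summing the three triangle pieces with the $\tfrac12$ from $d_{\mathrm{tr}}$ yields the constants $3\varepsilon+3\tau+\tfrac72\tau_0$ exactly.
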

We divide the proof of Theorem \ref{the:Single-shot_for_non-regular_MMWU} into two subsections "before applying assumptions for edges" and "under assumptions for edges".

\subsection{Before applying assumptions for edges}
First, we consider quantum hypergraph soft covering $(\mathcal{V,E})$ before applying  assumptions for edges $E_x\in\mathcal{E}$.
To simulate $E_p$ by constructing the codebook $\mathcal{C}$ from MMWU algorithm, we define the cost matrix judiciously as follows.
\begin{definition}(Cost matrix)
    \label{def:updated_cost_matrix_MMWU}
    Given the Hermitian matrices $E_p$ and $E_x\in \mathcal{E}$, we define 
    \begin{align}\label{eq:def:d_max}
        D_\mathrm{max} := \max_{x} \ln {\lambda_\mathrm{max}\Big(E_p^{-\frac{1}{2}}E_x E_p^{-\frac{1}{2}}\Big)},
    \end{align}
    where $\lambda_\mathrm{max} (A)$ is a largest eigenvalue of Hermitian matrix $A$.
    Then, we define the cost matrix as 
    $$M(x) :=E_p^{-\frac{1}{2}}E_x E_p^{-\frac{1}{2}} e^{-D_\mathrm{max}}.$$
\end{definition}
$D_\mathrm{max}$ bounds the cost matrix as $0\preceq M(x)\preceq \mathbf{1}_{\mathcal{V}}$ for $\mathcal{X}$-player's all strategies $x\in {\cal X}$. 
We run Algorithm \ref{algo:MMWU} for subspace $\mathrm{supp}(E_p)$ instead of $\mathcal{V}$; then ${\cal H}$-player's mixed state $F(l)$ is on subspace $\mathrm{supp}(E_p)$.
We next define ${\cal X}$-player's pure strategy at each round $l$. 
\begin{definition}(${\cal X}$-player's pure strategy) 
    \label{def:X_strategy_MMWU}
    At each round $1\leq l\leq L$, ${\cal H}$-player selects a mixed state $F(l)$. 
    For cost matrix $M(x)$, ${\cal X}$-player selects a pure strategy $x_l\in{\cal X}$ \footnote{If the $x$ that has the maximum value cannot be uniquely determined, choose an arbitrary one.} such that 
    $$x_l := \underset{x}{\mathrm{argmax}}\; \mathrm{Tr}\big(F(l)M(x)\big). $$
\end{definition}

Note that such $\mathcal{X}$-player's pure strategy $x_l$ satisfies the following lemma.

\begin{lemma}
    \label{lem:X_strategy_MMWU}
    At each round $l$, for ${\cal H}$-player's mixed state $F(l)$, ${\cal X}$-player's pure strategy satisfies
    \begin{equation}
        \label{eq:sec4:X_strategy_MMWU}
        \mathrm{Tr}\big(F(l)M(x_l)\big) \geq e^{-D_\mathrm{max}}.
    \end{equation}
\end{lemma}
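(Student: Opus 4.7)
The plan is to exploit the fact that $\mathcal{X}$-player's strategy $x_l$ maximizes the cost; hence the cost evaluated at $x_l$ is at least the cost averaged against any probability distribution on $\mathcal{X}$, in particular the input distribution $p$. That is, I would first write
$$\mathrm{Tr}\bigl(F(l)M(x_l)\bigr) \;\geq\; \sum_{x\in\mathcal{X}} p(x)\,\mathrm{Tr}\bigl(F(l)M(x)\bigr) \;=\; \mathrm{Tr}\!\left(F(l)\sum_{x\in\mathcal{X}}p(x)M(x)\right).$$

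Next I would substitute the definition of the cost matrix and pull the sum inside: by linearity,
$$\sum_{x\in\mathcal{X}}p(x)M(x) \;=\; e^{-D_\mathrm{max}}\,E_p^{-\frac{1}{2}}\!\left(\sum_{x\in\mathcal{X}}p(x)E_x\right)\!E_p^{-\frac{1}{2}} \;=\; e^{-D_\mathrm{max}}\,E_p^{-\frac{1}{2}}E_p E_p^{-\frac{1}{2}}.$$
Using the generalized inverse convention recalled in the Notation paragraph, $E_p^{-\frac{1}{2}}E_p E_p^{-\frac{1}{2}}$ is precisely the orthogonal projection $\Pi_{E_p}$ onto $\mathrm{supp}(E_p)$.

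Finally, I would invoke the crucial observation already noted before Definition \ref{def:X_strategy_MMWU}: the Algorithm is run on the subspace $\mathrm{supp}(E_p)$, so every mixed state $F(l)$ produced by the $\mathcal{H}$-player is supported on $\mathrm{supp}(E_p)$. Consequently $\mathrm{Tr}(F(l)\Pi_{E_p})=\mathrm{Tr}(F(l))=1$, which yields
$$\mathrm{Tr}\bigl(F(l)M(x_l)\bigr) \;\geq\; e^{-D_\mathrm{max}}\,\mathrm{Tr}(F(l)\Pi_{E_p}) \;=\; e^{-D_\mathrm{max}},$$
which is the claimed inequality.

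The proof is essentially a one-line computation, so there is no serious obstacle; the only subtle point is the use of the generalized inverse. If $E_p$ is not full rank, $E_p^{-\frac{1}{2}}E_p E_p^{-\frac{1}{2}}$ is not the identity on $\mathcal{V}$ but only on $\mathrm{supp}(E_p)$. The calculation goes through cleanly precisely because $F(l)$ is restricted to that support, and this is exactly why the Algorithm is run on $\mathrm{supp}(E_p)$ rather than on $\mathcal{V}$.
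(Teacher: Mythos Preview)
Your proof is correct and follows essentially the same approach as the paper's own proof: both bound the maximum by the $p$-average, use linearity to collapse $\sum_x p(x)E_p^{-1/2}E_xE_p^{-1/2}$ to the support projector $\{E_p>0\}$, and then invoke that $F(l)$ lives on $\mathrm{supp}(E_p)$ so that $\mathrm{Tr}(F(l)\{E_p>0\})=1$. The only cosmetic difference is that the paper first proves $\max_x \mathrm{Tr}(F(l)E_p^{-1/2}E_xE_p^{-1/2})\geq 1$ and multiplies by $e^{-D_\mathrm{max}}$ at the end, whereas you carry the factor $e^{-D_\mathrm{max}}$ through $M(x)$ from the start.
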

\begin{proof}
Because ${\cal H}$-player's mixed state is on subspace $\mathrm{supp}(E_p)$, 
we get $F(l) \{E_p>0\} = F(l).$
Thus, we get
\begin{equation*}
    \begin{split}
        &\max_x \mathrm{Tr}\Big(F(l)E_p^{-\frac{1}{2}} E_x E_p^{-\frac{1}{2}} \Big) \\&\qquad\geq \mathbb{E}_{X\sim p}\Big[\mathrm{Tr}\Big(F(l)E_p^{-\frac{1}{2}} E_XE_p^{-\frac{1}{2}} \Big)\Big]
        \\&\qquad=
        \mathrm{Tr}\bigg(F(l) \{E_p > 0\} \bigg) 
        =\mathrm{Tr}(F(l))=1,
    \end{split}
\end{equation*}
where the last equality follows from the fact that $F(l)$ is a density matrix.  
Multiplying $e^{-D_{\mathrm{max}}}$, we get \eqref{eq:sec4:X_strategy_MMWU}. 
\end{proof}

Applying 
Lemma \ref{lem:MMWU_Algorithm} 
to cost matrix definition in Definition \ref{def:updated_cost_matrix_MMWU} 
and $\mathcal{X}$-player's pure strategy in Definition \ref{def:X_strategy_MMWU},  
we get the following 
codebook $\mathcal{C}$ from ${\cal X}$-player's pure strategies.

\begin{lemma}
\label{lem:MMWU_algorithm_peformance_quantum}
  Given the Hermitian matrices $E_p$ and $E_x\in \mathcal{E}$, and arbitrary small $\varepsilon\in(0,\frac{1}{2})$, the codebook $\mathcal{C}$ obtained by the MMWU algorithm satisfies
\begin{equation}\label{eq:sec4:channel_Resolvperfor_}
E_p -E_{\mathcal{C}} \preceq 2\varepsilon  E_p,  \end{equation}
as long as the size $L$ satisfies that 
\begin{equation}\label{eq:sec4:assumption_MMWU_L}
    L \geq \frac{e^{D_\mathrm{max}}\ln{\dim({\cal V})}}{\varepsilon^2}.
\end{equation}
\end{lemma}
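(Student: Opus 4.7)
The plan is to substitute the cost matrix from Definition \ref{def:updated_cost_matrix_MMWU} and the greedy $\mathcal{X}$-player strategy from Definition \ref{def:X_strategy_MMWU} into the regret bound of Lemma \ref{lem:MMWU_Algorithm}, unpack both sides, and then enforce the codebook-size condition \eqref{eq:sec4:assumption_MMWU_L}. Because the MMWU algorithm is run on the subspace $\mathrm{supp}(E_p)$, the dimension appearing in Lemma \ref{lem:MMWU_Algorithm} can be replaced by $\dim(\mathrm{supp}(E_p))\le\dim(\mathcal{V})$.

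First I would lower-bound the left-hand side of Lemma \ref{lem:MMWU_Algorithm}: Lemma \ref{lem:X_strategy_MMWU} gives $\mathrm{Tr}(F(l)M(x_l))\ge e^{-D_{\max}}$ at every round, so $\sum_{l=1}^{L}\mathrm{Tr}(F(l)M(x_l))\ge Le^{-D_{\max}}$. Next I would rewrite the right-hand side: since $M(x_l)=e^{-D_{\max}}E_p^{-1/2}E_{x_l}E_p^{-1/2}$ and $E_{\mathcal{C}}=\frac{1}{L}\sum_l E_{x_l}$, one obtains $\sum_{l=1}^L \bra{\psi}M(x_l)\ket{\psi} = Le^{-D_{\max}}\bra{\psi}E_p^{-1/2}E_{\mathcal{C}}E_p^{-1/2}\ket{\psi}$. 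Substituting these into Lemma \ref{lem:MMWU_Algorithm} and dividing by $Le^{-D_{\max}}$ produces, for every unit $\ket{\psi}\in\mathrm{supp}(E_p)$,
\begin{equation*}
1-\varepsilon \le \bra{\psi}E_p^{-1/2}E_{\mathcal{C}}E_p^{-1/2}\ket{\psi} + \frac{e^{D_{\max}}\ln\dim(\mathcal{V})}{L\varepsilon}.
\end{equation*}
The hypothesis \eqref{eq:sec4:assumption_MMWU_L} on $L$ forces the last term to be at most $\varepsilon$, hence $E_p^{-1/2}E_{\mathcal{C}}E_p^{-1/2}\succeq(1-2\varepsilon)\Pi$, where $\Pi$ is the projector onto $\mathrm{supp}(E_p)$. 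Conjugating both sides by $E_p^{1/2}$ and using $E_p^{1/2}E_p^{-1/2}=\Pi$ together with $E_p^{1/2}\Pi E_p^{1/2}=E_p$ then delivers the claimed inequality \eqref{eq:sec4:channel_Resolvperfor_}.

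The main obstacle I expect is the careful bookkeeping around the generalized inverse $E_p^{-1/2}$. Because every $F(l)$ lives on $\mathrm{supp}(E_p)$, the entire regret inequality lies on that subspace, and the two identities $E_p^{1/2}E_p^{-1/2}=\Pi$ and $E_p^{1/2}\Pi E_p^{1/2}=E_p$ make the final conjugation step collapse cleanly to the stated operator inequality; these identities must be verified against the generalized-inverse convention of the paper, and one must also confirm that running MMWU on $\mathrm{supp}(E_p)$ merely replaces $\dim(\mathcal{H})$ in Lemma \ref{lem:MMWU_Algorithm} by $\dim(\mathrm{supp}(E_p))$, which is upper bounded by $\dim(\mathcal{V})$ so the size condition \eqref{eq:sec4:assumption_MMWU_L} is enough to close the argument.
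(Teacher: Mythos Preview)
Your proposal is correct and follows essentially the same route as the paper: plug the cost matrix and the greedy $\mathcal{X}$-player strategy into the MMWU regret bound, use the size condition \eqref{eq:sec4:assumption_MMWU_L} to absorb the $\ln\dim(\mathcal{V})$ term, and then conjugate by $E_p^{1/2}$. The only cosmetic difference is that the paper inserts an explicit three-case analysis (for $\ket{\psi}$ in $\mathrm{supp}(E_p)$, in $\ker(E_p)$, and in their superpositions) to promote the scalar inequality to an operator inequality on all of $\mathcal{V}$ before conjugating, whereas you conjugate directly; since the range of $E_p^{1/2}$ is $\mathrm{supp}(E_p)$, your shortcut is legitimate.
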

\begin{proof}
    First, we derive the equation such that for every state $ \ket{\psi} \in {\cal V},$
    \begin{equation}\label{eq:sec4:4_MMWU}
        \bra{\psi}E_p^{-\frac{1}{2}}E_{\mathcal{C}} E_p^{-\frac{1}{2}} - (1-2\varepsilon) \{ E_p>0\} \ket{\psi} \geq 0,
    \end{equation}
    with three cases of $\ket{\psi}$: $ \ket{\psi_1}\in\mathrm{supp}(E_p)$, $ \ket{\psi_0} \in \mathrm{ker}(E_p)$ and super position of $\ket{\psi_1}, \ket{\psi_0}.$

    Case $\ket{\psi_1}\in\mathrm{supp}(E_p )$: 
    Note that $\dim \Big(\mathrm{supp}(E_p )\Big)\leq \dim (\mathcal{V})$. 
    Applying Lemma \ref{lem:MMWU_Algorithm} on subspace $\mathrm{supp}(E_p)$ for
    Definition \ref{def:updated_cost_matrix_MMWU} and \ref{def:X_strategy_MMWU} and Lemma \ref{lem:X_strategy_MMWU},
    we get for every state $\ket{\psi_1} \in \mathrm{supp}(E_p),$
    \begin{align}
        \label{eq:sec4:4_MMWU_23}
        &\sum_{l=1}^L \bra{\psi_1} M(x_l) \ket{\psi_1}
        \geq (1-\varepsilon)L e^{-D_\mathrm{max}} - \frac{\ln{\dim (\mathcal{V})}}{\varepsilon}.
    \end{align}
    From Definition \ref{def:updated_cost_matrix_MMWU}, we rewrite the left side of equation as
    \begin{align*}
        &\sum_{l=1}^L \bra{\psi_1}E_p^{-\frac{1}{2}}E_{x_l} E_p^{-\frac{1}{2}}\ket{\psi_1} e^{-D_\mathrm{max}} \notag
        \\&\qquad = L \bra{\psi_1}E_p^{-\frac{1}{2}} E_{\mathcal{C}} E_p^{-\frac{1}{2}}\ket{\psi_1} e^{-D_\mathrm{max}}.
    \end{align*}
    Then multiplying $\frac{e^{D_\mathrm{max}}}{L}$ to \eqref{eq:sec4:4_MMWU_23}, we get 
    \begin{align}\label{eq:sec4:4}
    \bra{\psi_1}E_p^{-\frac{1}{2}}E_{\mathcal{C}} {E_p }^{-\frac{1}{2}}\ket{\psi_1} &\geq 1-\varepsilon - \frac{e^{D_\mathrm{max}} \ln{\dim (\mathcal{V})}}{L\varepsilon} \notag
        \\& \geq 1-2\varepsilon,
    \end{align}
    where the last inequality is correct if the size $L$ satisfies \eqref{eq:sec4:assumption_MMWU_L}.
    Note that $\bra{\psi_1} \{E_p > 0\}\ket{\psi_1}= 1$ for every state $\ket{\psi_1}\in \mathrm{supp}(E_p ).$ After multiplying 
    $\bra{\psi_1} \{E_p > 0\}\ket{\psi_1}$ for right side and rearranging \eqref{eq:sec4:4}, \eqref{eq:sec4:4_MMWU} holds.

    Case $\ket{\psi_0}\in \mathrm{ker}(E_p)$: 
    Note that for every state $\ket{\psi_0}\in \mathrm{ker}(E_p)$, we get
    $\{E_p>0\}\ket{\psi_0} = \mathbf{0}.$
    Additionally, $ E_p^{-\frac{1}{2}}  \ket{\psi_0}  = \mathbf{0}$ because of the generalized inverse.   
    \eqref{eq:sec4:4_MMWU} holds.

    Case super position of $\ket{\psi_1}, \ket{\psi_0}$: 
    Note that the case $\ket{\psi_0}$ teaches us $E_p^{-\frac{1}{2}}\ket{\psi_0} = \{E_p>0\} \ket{\psi_0} = \mathbf{0}.$ 
    Thus, the case $\ket{\psi_1} \in \mathrm{supp}(E_p) $ only remains and \eqref{eq:sec4:4_MMWU} holds.

    We get \eqref{eq:sec4:4_MMWU}. From semi-definite matrix definition, \eqref{eq:sec4:4_MMWU} equals that
    $$E_p^{-\frac{1}{2}}E_{\mathcal{C}} E_p^{-\frac{1}{2}} \succeq (1-2\varepsilon) \{E_p>0\}. $$
    After multiplying both sides $ E_p^{-\frac{1}{2}}$, 
    we get 
    $ E_{\mathcal{C}} \succeq  (1-2\varepsilon) E_p$
    as long as the size $L$ satisfies \eqref{eq:sec4:assumption_MMWU_L}.
\end{proof}

\subsection{Under assumptions for edges}
We consider quantum hypergraph soft covering $(\mathcal{V,E})$ under assumption \eqref{eq:assumption_E_x} and \eqref{eq:assumption_Tr_rho_pi} for edges $E_x\in\mathcal{E}$. 
Given an projector $\Pi $ and a Hermitian matrix $A$, we denote the pinching map as 
    $$A^\Pi  := \Pi \; A \Pi.$$ 
    Fix $\tau_0$ as an arbitrary positive constant. Diagonalize $E_p$ such as $E_p = \sum_{j}r_j\pi_j,$  and construct
    \begin{align*}
        \Pi_1 := \sum_{j: r_j > \frac{\tau_0}{\dim(\mathcal{V})}} \pi_j, \qquad \Pi_0 := \sum_{j: r_j \leq \frac{\tau_0}{\dim(\mathcal{V})}} \pi_j.
    \end{align*}
    Note that the projectors $\Pi_0, \Pi_1$ commutes to $E_p$.   
    From definition of $\Pi_0, \Pi_1$ and assumption \eqref{eq:assumption_Tr_rho_pi}, we get 
    \begin{align}\label{eq:Tr_rho_pi_0}
        \mathrm{Tr}\Big(E_p^{\Pi_0}  \Big) =  \mathrm{Tr}\Big(\Pi_0 E_p {\Pi_0}  \Big) \leq \frac{\tau_0\mathrm{Tr}(\Pi_0)}{\dim(\mathcal{V})}\leq \tau_0,
    \end{align}
    \begin{align}\label{eq:Tr_rho_pi_1}
        \mathrm{Tr}\Big(E_p^{\Pi_1}  \Big) = \mathrm{Tr}(E_p) - \mathrm{Tr}\Big(E_p^{\Pi_0}  \Big) \geq  1- \tau - \tau_0,
    \end{align}
    \begin{align}\label{eq:rho_pi_1_property}
        (\Pi_1 E_p \Pi_1)^{-1} \preceq \frac{ \dim(\mathcal{V})} {\tau_0}\Pi_1.
    \end{align}
    We apply Definition \ref{def:updated_cost_matrix_MMWU} to $E_x^{\Pi_1}, E_p^{\Pi_1}$ instead of $E_x, E_p$. 
    Then, $D_\mathrm{max}$ is bounded by the following lemma.

\begin{lemma}(Bound of $D_\mathrm{max}$)\label{lem:d_max_bound}
    Given Hermitian matrices $E_p$ and $E_x \preceq \eta \mathbf{1}_{\mathcal{V}}$ $E_x\in\mathcal{E}$, 
    we get bound of $D_\mathrm{max}$ such that
    \begin{equation}
        \label{eq:bound_of_D_max_0}
        D_\mathrm{max} \leq \ln \frac{\eta \dim(\mathcal{V})}{\tau_0}.
    \end{equation} 
\end{lemma}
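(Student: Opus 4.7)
The plan is to translate the Löwner-order assumption $E_x \preceq \eta \mathbf{1}_\mathcal{V}$ and the key property \eqref{eq:rho_pi_1_property} into a spectral bound on the sandwiched operator appearing in the redefined $D_\mathrm{max}$. Since the subsection explicitly substitutes $E_x^{\Pi_1}, E_p^{\Pi_1}$ for $E_x, E_p$ in Definition \ref{def:updated_cost_matrix_MMWU}, the quantity to bound is
\begin{equation*}
D_\mathrm{max} = \max_x \ln \lambda_\mathrm{max}\Bigl((E_p^{\Pi_1})^{-\frac{1}{2}} E_x^{\Pi_1} (E_p^{\Pi_1})^{-\frac{1}{2}}\Bigr),
\end{equation*}
so I would reduce everything to a chain of operator inequalities of the form $(E_p^{\Pi_1})^{-1/2} E_x^{\Pi_1} (E_p^{\Pi_1})^{-1/2} \preceq c \, \mathbf{1}_\mathcal{V}$ with $c = \eta \dim(\mathcal{V})/\tau_0$.

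First I would pinch the hypothesis $E_x \preceq \eta \mathbf{1}_\mathcal{V}$ by $\Pi_1$ to obtain $E_x^{\Pi_1} \preceq \eta \Pi_1$. Then, since conjugation $A \mapsto X A X^*$ preserves the Löwner order, sandwiching both sides by $(E_p^{\Pi_1})^{-1/2}$ yields
\begin{equation*}
(E_p^{\Pi_1})^{-\frac{1}{2}} E_x^{\Pi_1} (E_p^{\Pi_1})^{-\frac{1}{2}} \preceq \eta \, (E_p^{\Pi_1})^{-\frac{1}{2}} \Pi_1 (E_p^{\Pi_1})^{-\frac{1}{2}}.
\end{equation*}
The support of $(E_p^{\Pi_1})^{-1/2}$ lies in the range of $\Pi_1$, so $\Pi_1$ acts as the identity there, collapsing the right-hand side to $\eta (E_p^{\Pi_1})^{-1}$. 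Finally I would invoke \eqref{eq:rho_pi_1_property} to bound $(E_p^{\Pi_1})^{-1} \preceq (\dim(\mathcal{V})/\tau_0)\, \Pi_1 \preceq (\dim(\mathcal{V})/\tau_0)\, \mathbf{1}_\mathcal{V}$, take the maximum eigenvalue of both sides, and pass to the logarithm to obtain \eqref{eq:bound_of_D_max_0}.

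The main obstacle I anticipate is purely a bookkeeping one: making sure the generalized inverse $(E_p^{\Pi_1})^{-1}$ is handled correctly, in particular the identity $(E_p^{\Pi_1})^{-1/2} \Pi_1 (E_p^{\Pi_1})^{-1/2} = (E_p^{\Pi_1})^{-1}$, which relies on the fact that $\mathrm{supp}(E_p^{\Pi_1}) \subseteq \mathrm{range}(\Pi_1)$. Once this observation is in place, the remaining inequalities are immediate consequences of operator monotonicity of conjugation and the two hypotheses, and no delicate spectral argument is needed.
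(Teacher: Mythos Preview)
Your proposal is correct and follows essentially the same route as the paper: pinch the hypothesis $E_x \preceq \eta \mathbf{1}_\mathcal{V}$ by $\Pi_1$, conjugate by $(E_p^{\Pi_1})^{-1/2}$, use that $\Pi_1$ commutes with (equivalently, is the identity on the support of) $E_p^{\Pi_1}$ to collapse the right-hand side to $\eta (E_p^{\Pi_1})^{-1}$, and then apply \eqref{eq:rho_pi_1_property}. The only cosmetic difference is that the paper phrases the two conjugations as a single step and cites commutativity rather than support inclusion for the simplification.
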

\begin{proof}
    By multiplying $\Pi_1$ and $\Big({E_p^{\Pi_1}}\Big)^{-\frac{1}{2}}$ to assumption \eqref{eq:assumption_E_x} from both sides in this order, we get
    \begin{align*}
       \Big({E_p^{\Pi_1}}\Big)^{-\frac{1}{2}} E_x^{\Pi_1}\Big({E_p^{\Pi_1}}\Big)^{-\frac{1}{2}} &\preceq \eta \Big({E_p^{\Pi_1}}\Big)^{-\frac{1}{2}} \Pi_1 \Big({E_p^{\Pi_1}}\Big)^{-\frac{1}{2}} \notag
       \\&= \eta \Pi_1 \Big({E_p^{\Pi_1}}\Big)^{-1} \Pi_1   \preceq   \frac{\eta\dim (\mathcal{V})}{\tau_0} \Pi_1
    \end{align*}
    where the equality follows from $\Pi_1$ commuting to $E_p^{\Pi_1}$ and the last inequality follows from \eqref{eq:rho_pi_1_property}.
    Thus, substituting 
    above equation to \eqref{eq:def:d_max}, we get \eqref{eq:bound_of_D_max_0}. 
\end{proof}

Under the assumption \eqref{eq:assumption_E_x} and \eqref{eq:assumption_Tr_rho_pi}, 
applying Lemma \ref{lem:MMWU_algorithm_peformance_quantum} to $E_x^{\Pi_1}, E_p^{\Pi_1}$ instead of $E_x, E_p$, 
the MMWU algorithm solves quantum hypergraph soft covering as follows.

\textit{Proof of Theorem \ref{the:Single-shot_for_non-regular_MMWU}.}
    By applying \eqref{eq:sec4:assumption_MMWU_L} to \eqref{eq:bound_of_D_max_0}, we get size of codebook $|\mathcal{C}|= L$ in \eqref{eq:the:Single-shot_for_non-regular_MMWU_L_size}. 
    By applying the triangle inequality twice, we measure the trace distance and get
    \begin{align}\label{eq:the:Single-shot_for_non-regular_MMWU_1}
    d_\mathrm{tr}(E_p, E_{\mathcal{C}}) 
    &\leq \frac{1}{2}\| E_p - E_p^{\Pi_1}  \|_1 + \frac{1}{2}\|E_{\mathcal{C}}^{\Pi_1} - E_{\mathcal{C}}\|_1 \notag
    \\& \quad + \frac{1}{2} \|E_p^{\Pi_1} - E_{\mathcal{C}}^{\Pi_1}\|_1 
    \end{align}
    The first term of \eqref{eq:the:Single-shot_for_non-regular_MMWU_1} equals 
    \begin{equation}\label{eq:the:Single-shot_for_non-regular_MMWU_a1}
        \|E_p - E_p^{\Pi_1} \|_1  = \|E_p{\Pi_0} \|_1
        =\mathrm{Tr}(E_p^{\Pi_0}) \leq \tau_0
    \end{equation}
    where the first equality follows from projector $\Pi_1$ commuting to $E_p$ and the inequality follows from \eqref{eq:Tr_rho_pi_0}. 
    We shall bound the second term of \eqref{eq:the:Single-shot_for_non-regular_MMWU_1}. Note that 
    \begin{align} \label{eq:the:Single-shot_for_non-regular_MMWU_b0}
        \mathrm{Tr}\Big(E_{\mathcal{C}}^{\Pi_1}\Big) &= \mathrm{Tr}\Big( E_p^{\Pi_1}\Big)- \bigg(\mathrm{Tr}\Big( E_p^{\Pi_1}\Big) - \mathrm{Tr}\Big(E_{\mathcal{C}}^{\Pi_1}\Big)\bigg) \notag
        \\& \geq \mathrm{Tr}\Big( E_p^{\Pi_1}\Big) - 2\varepsilon \geq 1 - 2\varepsilon- \tau - \tau_0 ,
    \end{align}
    where the second inequality follows from \eqref{eq:sec4:channel_Resolvperfor_} and the last inequality follows from \eqref{eq:Tr_rho_pi_1}. Using gentle measurement lemma (e.g., see \cite[Lemma 9.4.2]{Wilde_2018}) for $\mathrm{Tr}\Big(E_{\mathcal{C}}^{\Pi_1}\Big) \geq  1- 2\varepsilon- \tau- \tau_0 $,  
    we get
    \begin{align}\label{eq:the:Single-shot_for_non-regular_MMWU_b1}
        \|E_{\mathcal{C}} - E_{\mathcal{C}}^{\Pi_1} \|_1
        \leq 2\sqrt{2\varepsilon+\tau + \tau_0  }.
    \end{align}
    To measure the last term of \eqref{eq:the:Single-shot_for_non-regular_MMWU_1}, we prepare density matrices 
    $$\sigma_{E_p^{\Pi_1}} = \frac{E_p^{\Pi_1}}{\mathrm{Tr} (E_p^{\Pi_1})}, \qquad \sigma_{E_{\mathcal{C}}^{\Pi_1}}= \frac{E_{\mathcal{C}}^{\Pi_1}}{\mathrm{Tr}(E_{\mathcal{C}}^{\Pi_1})}.$$
    These density matrices have a good relation such that 
    \begin{align}\label{eq:the:Single-shot_for_non-regular_MMWU_c0}
        \sigma_{E_p^{\Pi_1}} - \sigma_{E_{\mathcal{C}}^{\Pi_1}}
        &\preceq \frac{E_p^{\Pi_1}}{\mathrm{Tr}(E_p^{\Pi_1})} - E_{\mathcal{C}}^{\Pi_1} \notag
        \\&\preceq  \frac{E_p^{\Pi_1}}{1 - \tau - \tau_0} - E_{\mathcal{C}}^{\Pi_1}\notag
        \\&\preceq (1 + 2 (\tau + \tau_0))E_p^{\Pi_1} -E_{\mathcal{C}}^{\Pi_1} \notag
        \\&\preceq 2(\varepsilon + \tau + \tau_0) E_p^{\Pi_1},
    \end{align}
    where the first inequality follows from $\sigma_{E_{\mathcal{C}}^{\Pi_1}}\succeq E_{\mathcal{C}}^{\Pi_1}$, 
    the second inequality follows from \eqref{eq:Tr_rho_pi_1}, 
    the third inequality follows that for $z = \tau + \tau_0 \in (0, \frac{1}{2})$, we get 
    $\frac{1}{1- z} \leq 2z + 1,$
    and the last inequality follows from \eqref{eq:sec4:channel_Resolvperfor_}.
    By applying the triangle inequality twice, we get 
    \begin{align}\label{eq:the:Single-shot_for_non-regular_MMWU_c1}
        \|E_p^{\Pi_1} -E_{\mathcal{C}}^{\Pi_1}\|_1 &\leq  \|\sigma_{E_p^{\Pi_1}} - \sigma_{E_{\mathcal{C}}^{\Pi_1}}\|_1 + \|\sigma_{E_p^{\Pi_1}} - E_p^{\Pi_1} \|_1 \notag 
        \\& \quad + \|\sigma_{E_{\mathcal{C}}^{\Pi_1}}-E_{\mathcal{C}}^{\Pi_1}\|_1 .
    \end{align}
    Because of density matrix $\sigma_{E_p^{\Pi_1}}, \sigma_{E_{\mathcal{C}}^{\Pi_1}}$, we get
    \begin{equation}\label{eq:the:Single-shot_for_non-regular_MMWU_c2}
        \|\sigma_{E_p^{\Pi_1}} - \sigma_{E_{\mathcal{C}}^{\Pi_1}}\|_1 = 2\mathrm{Tr}\Big(\sigma_{E_p^{\Pi_1}} - \sigma_{E_{\mathcal{C}}^{\Pi_1}}\Big)_+ 
        \leq 4(\varepsilon + \tau + \tau_0),
    \end{equation}
    where the inequality follows from \eqref{eq:the:Single-shot_for_non-regular_MMWU_c0}.
    Calculating the other terms in \eqref{eq:the:Single-shot_for_non-regular_MMWU_c1}, we get
    \begin{align}\label{eq:the:Single-shot_for_non-regular_MMWU_c3}
        \|\sigma_{E_p^{\Pi_1}} - E_p^{\Pi_1}\|_1 &= \bigg(\frac{1}{\mathrm{Tr}(E_p^{\Pi_1})} -1 \bigg) \| E_p^{\Pi_1}\|_1 \notag
        \\&= 1- \mathrm{Tr}(E_p^{\Pi_1})
         \leq \tau + \tau_0,
    \end{align}
    where the second equality follows form $\|E_p^{\Pi_1}\|_1 = \mathrm{Tr}(E_p^{\Pi_1})$  
    and the inequality follows from  \eqref{eq:Tr_rho_pi_1}. We also get
    \begin{align}\label{eq:the:Single-shot_for_non-regular_MMWU_c4}
        \|\sigma_{E_{\mathcal{C}}^{\Pi_1}}- E_{\mathcal{C}}^{\Pi_1}\|_1 &= 1 - \mathrm{Tr}\Big( E_{\mathcal{C}}^{\Pi_1}\Big)
        \leq 2\varepsilon + \tau +\tau_0,
    \end{align}
    where 
    the inequality follows from \eqref{eq:the:Single-shot_for_non-regular_MMWU_b0}.
    Substituting \eqref{eq:the:Single-shot_for_non-regular_MMWU_a1}, \eqref{eq:the:Single-shot_for_non-regular_MMWU_b1}, \eqref{eq:the:Single-shot_for_non-regular_MMWU_c1}, \eqref{eq:the:Single-shot_for_non-regular_MMWU_c2}, 
    \eqref{eq:the:Single-shot_for_non-regular_MMWU_c3} and  \eqref{eq:the:Single-shot_for_non-regular_MMWU_c4} to \eqref{eq:the:Single-shot_for_non-regular_MMWU_1}, we get \eqref{eq:the:Single-shot_for_non-regular_MMWU}. \qed



\section{Classical-Quantum channel resolvability}\label{sec:Classical-Quantum_channel_resolvability}
In this section, we prove C-Q channel resolvability for block length $n$, using fixed type $T$ on $\mathcal{X}$; see also \cite[Section 2]{Csiszar-Korner11}. 
Let $\mathcal{X}_T^n$ be a set of all sequences $x^n$ in $\mathcal{X}^n$ that have type $T$ and $p_n^T$ be an input distribution on $\mathcal{X}^n_T$. We shall simulate the output state $W_{p_n^T}$ 
by applying the quantum hypergraph soft covering.  
First, we introduce the typical projector and the conditional typical projector; see  \cite{Winter99}. 

\begin{lemma}(Typical projector)\label{lem:Typical_projector}
    For $\rho$ on $\mathcal{H}$ and an arbitrary positive constant $\alpha$, the orthogonal subspace projector $\Pi_{\rho, \alpha}^n$ commuting to $\rho^{\otimes n}$ exists such that
    \begin{equation}\label{eq:lem:Typical_projector_2_Tr_Pi}
        \mathrm{Tr}(\Pi_{\rho,\alpha}^n)  \leq \exp \Big(nH(\rho) + \frac{1}{e}\dim (\mathcal{H})\alpha \sqrt{n}\Big),
    \end{equation}
    where $H(\rho)$ is von Neumann Entropy.  
\end{lemma}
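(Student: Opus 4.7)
The plan is to reduce the quantum statement to a classical problem by diagonalizing $\rho$ in its eigenbasis, constructing a projector onto frequency-typical product eigenstates, and using an elementary scalar inequality to obtain the specific $\tfrac{1}{e}\dim(\mathcal{H})$ constant.

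First I would write $\rho=\sum_{j=1}^{d} r_j\,|j\rangle\langle j|$ with $d=\dim(\mathcal{H})$, so that $\rho^{\otimes n}$ is diagonal in the product basis $\{|i^n\rangle:i^n\in[d]^n\}$ with eigenvalues $r_{i^n}=\prod_{k=1}^n r_{i_k}$. Then I would set
\[
\Pi_{\rho,\alpha}^n \;=\; \sum_{i^n\in T_\alpha^n} |i^n\rangle\langle i^n|,
\]
where $T_\alpha^n$ is a ``typical set'' of eigenindex strings defined by a frequency condition involving $\alpha\sqrt{n}$. Because $\Pi_{\rho,\alpha}^n$ is diagonal in the same basis as $\rho^{\otimes n}$, the commutativity claim is automatic, so the only real task is to upper-bound the rank $\mathrm{Tr}(\Pi_{\rho,\alpha}^n)=|T_\alpha^n|$.

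Second, I would lower-bound $r_{i^n}$ uniformly over $T_\alpha^n$. Writing
\[
-\log r_{i^n} \;=\; \sum_{j=1}^{d} N(j\mid i^n)\,(-\log r_j)
\;=\; nH(\rho)+\sum_{j=1}^{d}\bigl(N(j\mid i^n)-n r_j\bigr)(-\log r_j),
\]
I would bound the slack term using the elementary inequality $r|\log r|\le 1/e$ for $r\in(0,1]$ (equivalently $|\log r|\le 1/(er)$). With an appropriate $r_j$-dependent typicality condition of the form $|N(j\mid i^n)-nr_j|\le \alpha\sqrt{n}\cdot r_j$, each summand is bounded by $\alpha\sqrt{n}/e$, and summing over the $d$ indices yields the slack $\tfrac{1}{e}\dim(\mathcal{H})\alpha\sqrt{n}$. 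Consequently
\[
r_{i^n} \;\ge\; \exp\!\Bigl(-nH(\rho)-\tfrac{1}{e}\dim(\mathcal{H})\,\alpha\sqrt{n}\Bigr)\qquad \forall\, i^n\in T_\alpha^n.
\]

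Third, I would conclude by combining this uniform lower bound with $\mathrm{Tr}(\rho^{\otimes n}\Pi_{\rho,\alpha}^n)\le 1$: since
\[
1 \;\ge\; \mathrm{Tr}(\rho^{\otimes n}\Pi_{\rho,\alpha}^n) \;=\; \sum_{i^n\in T_\alpha^n} r_{i^n} \;\ge\; \Bigl(\min_{i^n\in T_\alpha^n} r_{i^n}\Bigr)\mathrm{Tr}(\Pi_{\rho,\alpha}^n),
\]
rearranging gives exactly \eqref{eq:lem:Typical_projector_2_Tr_Pi}.

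The main obstacle is matching the exact constant in the slack. The $\tfrac{1}{e}\dim(\mathcal{H})$ factor is tight only if the typicality condition couples the absolute deviation $|N(j\mid i^n)-nr_j|$ to $r_j$ in a way that absorbs the otherwise-unbounded $|\log r_j|$ terms; with a plain strong-typicality condition one instead obtains a slack proportional to $\sum_j|\log r_j|$, which blows up when eigenvalues are small. Choosing the typicality condition (possibly in the style of Winter \cite{Winter99}) so that the scalar bound $r_j|\log r_j|\le 1/e$ can be applied index-wise is therefore the critical step; once that is in place, everything else is routine diagonal bookkeeping.
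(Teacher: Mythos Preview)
The paper does not prove this lemma itself; it is quoted from Winter \cite{Winter99} as a known tool, so there is no in-paper argument to compare against. Your plan---diagonalize $\rho$, project onto frequency-typical product eigenstates, and extract the rank bound from a uniform eigenvalue lower bound together with $\mathrm{Tr}(\rho^{\otimes n}\Pi)\le 1$---is exactly the standard route, and your identification of the scalar inequality $-r\ln r\le 1/e$ as the source of the constant $\tfrac{1}{e}\dim(\mathcal H)$ is on target.

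The only point to watch is the specific typicality window. The relative-deviation condition $|N(j\mid i^n)-nr_j|\le\alpha\sqrt{n}\,r_j$ you propose does deliver \eqref{eq:lem:Typical_projector_2_Tr_Pi} as you describe, but the \emph{same} projector $\Pi_{\rho,\alpha}^n$ is later invoked (as $\Pi_{W_T,\alpha\sqrt{|\mathcal X|}}^n$) in the concentration estimate \eqref{eq:Conditional_Typical_projector_8_weak_law_of_large_number}, and Chebyshev against a relative threshold gives a per-coordinate failure probability of order $(1-r_j)/(\alpha^2 r_j)$, which is useless when some eigenvalue $r_j$ is small. So your construction proves the lemma as \emph{stated} but would not be the projector that is consistent with its downstream use. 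Since you already flag the precise choice of typicality window as the crux and defer to \cite{Winter99} for it, the proposal is sound; just be aware that the definition you ultimately adopt must simultaneously support the trace bound here and the Chebyshev-type probability bound in Lemma~\ref{lem:Conditional_Typical_projector}.
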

\begin{lemma}(Conditional typical projector)\label{lem:Conditional_Typical_projector}
    For $x^n \in \mathcal{X}_T^n$, stationary memoryless C-Q channel $W$, and an arbitrary positive constant $\alpha$, the orthogonal subspace projector $\Pi_{W, \alpha}^n(x^n)$ commuting to $W_{x^n}$ exists such that
    \begin{equation}\label{eq:lem:Conditional_Typical_projector_1_Tr_rho_Pi}
        \mathrm{Tr}(W_{x^n}\Pi_{W,\alpha}^n(x^n)) \geq 1- \frac{\dim (\mathcal{H}) |\mathcal{X}|}{\alpha^2},
    \end{equation}
    \begin{align} \label{eq:lem:Conditional_Typical_projector2_Tr_Pi}
        &\Pi_{W,\alpha}^n (x^n) W_{x^n} \Pi_{W,\alpha}^n (x^n) \leq \notag
        \\& \quad \Pi_{W,\alpha}^n(x^n) \exp \Big(-nH(W|T) + \frac{1}{e}\dim (\mathcal{H}) |\mathcal{X}|\alpha \sqrt{n}\Big),
    \end{align}
    \begin{equation}
    \label{eq:Conditional_Typical_projector_8_weak_law_of_large_number}
    \mathrm{Tr}\Big(W_{x^n} \Pi_{W_T,\alpha\sqrt{\mathcal{|X|}}}^n\Big)\geq 1 - \frac{\dim (\mathcal{H})\mathcal{|X|}}{\alpha^2}, 
\end{equation}
    where $H(W|T)$ is conditional von Neumann Entropy.
\end{lemma}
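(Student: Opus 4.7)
The plan is to construct the conditional typical projector from the spectral data of the channel and then analyze it with Chebyshev-type tail bounds, in the Winter style of \cite{Winter99}. For each letter $x \in \mathcal{X}$, write the spectral decomposition $W_x = \sum_j \lambda_j(x)\ket{e_j(x)}\bra{e_j(x)}$. Given $x^n \in \mathcal{X}_T^n$, the state $W_{x^n} = \bigotimes_i W_{x_i}$ is diagonal in the product basis $\{\ket{e_{j^n}(x^n)} := \bigotimes_i\ket{e_{j_i}(x_i)}\}_{j^n}$, with eigenvalues $\prod_i \lambda_{j_i}(x_i)$. I would define $\Pi_{W,\alpha}^n(x^n)$ as the projector onto the span of those basis vectors for which, for every $x \in \mathcal{X}$, the empirical distribution of $j_i$ over the positions $i$ with $x_i = x$ is $\alpha/\sqrt{n}$-close to $\lambda(x)$. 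This projector commutes with $W_{x^n}$ by construction.

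For \eqref{eq:lem:Conditional_Typical_projector_1_Tr_rho_Pi}, measuring $W_{x^n}$ in its own eigenbasis produces the classical distribution $\prod_i \lambda_{j_i}(x_i)$ on $j^n$. The complement of the typical event is a union over the $|\mathcal{X}|$ letter classes of sample-mean deviation events, and each letter contributes a Chebyshev tail of order $\dim(\mathcal{H})/\alpha^2$; the union bound supplies the $|\mathcal{X}|$ prefactor. For \eqref{eq:lem:Conditional_Typical_projector2_Tr_Pi}, any $j^n$ in the typical set satisfies $-\sum_i \log \lambda_{j_i}(x_i) = -\sum_x \sum_{i:x_i=x}\log\lambda_{j_i}(x)$, which is forced to lie within $\frac{1}{e}\dim(\mathcal{H})|\mathcal{X}|\alpha\sqrt{n}$ of $n H(W|T)$ by the letter-wise empirical closeness. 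This translates into the stated upper bound on every eigenvalue supported by the projector, which is exactly the operator inequality in \eqref{eq:lem:Conditional_Typical_projector2_Tr_Pi}.

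The main obstacle is \eqref{eq:Conditional_Typical_projector_8_weak_law_of_large_number}, because the unconditional typical projector $\Pi_{W_T,\alpha\sqrt{|\mathcal{X}|}}^n$ lives in the eigenbasis of $W_T^{\otimes n}$, which is generally incompatible with that of $W_{x^n}$. I would reduce to the commuting case via pinching: letting $\mathcal{P}$ denote the pinching map with respect to the $W_T$-eigenbasis $\{\ket{f_k}\}$, one has $\mathrm{Tr}(W_{x^n}\Pi_{W_T,\alpha\sqrt{|\mathcal{X}|}}^n) = \mathrm{Tr}(\mathcal{P}^{\otimes n}(W_{x^n})\Pi_{W_T,\alpha\sqrt{|\mathcal{X}|}}^n)$, and $\mathcal{P}^{\otimes n}(W_{x^n})$ now commutes with $W_T^{\otimes n}$. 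Since $\sum_x T(x)\bra{f_k}W_x\ket{f_k}$ equals the $k$-th eigenvalue of $W_T$, one can apply a positionwise Chebyshev inequality to the empirical distribution of outcomes $k^n$ induced by the independent measurements. The key accounting is that the positional distributions $\bra{f_\cdot}W_x\ket{f_\cdot}$ differ across the $|\mathcal{X}|$ letters of the type, and the inflated slack $\alpha\sqrt{|\mathcal{X}|}$ in the typicality radius is precisely what absorbs this heterogeneity to deliver the tail $\dim(\mathcal{H})|\mathcal{X}|/\alpha^2$.
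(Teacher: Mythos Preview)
The paper does not actually prove this lemma; it imports it wholesale from Winter's thesis \cite{Winter99} (see the sentence introducing Lemmas~\ref{lem:Typical_projector} and~\ref{lem:Conditional_Typical_projector}). So there is no in-paper argument to compare against, and your sketch is essentially a reconstruction of Winter's original proof, which is the right thing to do here.

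Your outline is sound. A couple of small sharpenings are worth noting. First, in the definition of the conditional typical set, the per-letter radius should scale with the number of positions carrying that letter: one requires $|N(j\mid x)/\!\big(nT(x)\big)-\lambda_j(x)|\le \alpha/\sqrt{nT(x)}$ (equivalently $|N(j\mid x)-nT(x)\lambda_j(x)|\le \alpha\sqrt{nT(x)}$), not $\alpha/\sqrt{n}$; with this normalization the Chebyshev bound gives exactly $1/\alpha^2$ per pair $(x,j)$ and the union bound produces the $\dim(\mathcal{H})|\mathcal{X}|$ prefactor in \eqref{eq:lem:Conditional_Typical_projector_1_Tr_rho_Pi}. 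Second, for \eqref{eq:Conditional_Typical_projector_8_weak_law_of_large_number} your pinching reduction is correct, but the specific role of the $\sqrt{|\mathcal{X}|}$ inflation is a Cauchy--Schwarz step: once each per-letter empirical distribution is $\alpha/\sqrt{nT(x)}$-close to $q_x(\cdot)=\langle f_\cdot|W_x|f_\cdot\rangle$, the overall empirical distribution of $k^n$ deviates from the $W_T$-spectrum by at most $\sum_x T(x)\cdot \alpha/\sqrt{nT(x)}=\alpha\sum_x\sqrt{T(x)/n}\le \alpha\sqrt{|\mathcal{X}|/n}$, which is exactly the enlarged typicality radius. This makes the $\dim(\mathcal{H})|\mathcal{X}|/\alpha^2$ tail fall out of the same per-letter Chebyshev bounds already used for \eqref{eq:lem:Conditional_Typical_projector_1_Tr_rho_Pi}, rather than from a fresh variance calculation on the heterogeneous product measure. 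With these two tweaks your argument matches Winter's line by line.
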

Using the projectors, we define each edge $E_x\in\mathcal{E}$ as
\begin{equation}\label{eq:edge_Q}
    Q_{x^n} = \Pi_{W_T, \alpha\sqrt{|\mathcal{X}|}}^n\Pi_{W,\alpha}^n(x^n) W_{x^n} \Pi_{W,\alpha}^n(x^n)\Pi_{W_T, \alpha\sqrt{|\mathcal{X}|}}^n.
\end{equation}
From \eqref{eq:lem:Conditional_Typical_projector_1_Tr_rho_Pi} and  \eqref{eq:Conditional_Typical_projector_8_weak_law_of_large_number}, $\mathrm{Tr}(Q_{x^n})$ is bounded by next Lemma. For readers' convenience, we provide a proof of the lemma in Appendix \ref{Proof:Bound_of_Q_Trace}.
\begin{lemma}(Bound of $Q_{x^n}$ trace)\label{lem:Tr_Q_bound}
    For $x^n \in \mathcal{X}_T^n,$ the Hermitian matrix $Q_{x^n}$ is bounded as 
    \begin{equation}\label{eq_Tr_Q}
        \mathrm{Tr}(Q_{x^n}) \geq 1 - \frac{2\dim(\mathcal{H})\mathcal{|X|}}{\alpha^2},
    \end{equation}
\end{lemma}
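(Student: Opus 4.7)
The goal is to establish the lower bound $\mathrm{Tr}(Q_{x^n}) \geq 1 - 2\dim(\mathcal{H})|\mathcal{X}|/\alpha^2$ starting from the two high-probability bounds on $W_{x^n}$: the conditional-typicality inequality \eqref{eq:lem:Conditional_Typical_projector_1_Tr_rho_Pi} and the unconditional-typicality inequality \eqref{eq:Conditional_Typical_projector_8_weak_law_of_large_number}, each of magnitude $\dim(\mathcal{H})|\mathcal{X}|/\alpha^2$. The plan is to avoid the gentle-operator lemma (which would only deliver a $\sqrt{\varepsilon}$-type bound) by exploiting commutativity and an algebraic splitting of the identity, so as to sum the two deficits linearly.

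First, abbreviate $\Pi_1 := \Pi^n_{W_T,\alpha\sqrt{|\mathcal{X}|}}$ and $\Pi_2 := \Pi^n_{W,\alpha}(x^n)$. Using $\Pi_1^2=\Pi_1$ and the cyclic property of the trace,
$$\mathrm{Tr}(Q_{x^n}) = \mathrm{Tr}\bigl(\Pi_1 \Pi_2 W_{x^n} \Pi_2 \Pi_1\bigr) = \mathrm{Tr}\bigl(W_{x^n}\,\Pi_2 \Pi_1 \Pi_2\bigr).$$
So it suffices to bound the defect $\mathrm{Tr}\bigl(W_{x^n}(I - \Pi_2 \Pi_1 \Pi_2)\bigr)$ above by $2\dim(\mathcal{H})|\mathcal{X}|/\alpha^2$, since $\mathrm{Tr}(W_{x^n}) = 1$.

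The core step is the identity
$$I - \Pi_2 \Pi_1 \Pi_2 \;=\; (I - \Pi_2) \;+\; \Pi_2 (I - \Pi_1)\Pi_2,$$
which writes the complement as a sum of two positive semidefinite operators. Inserting this into the trace splits the defect into two pieces. The first, $\mathrm{Tr}\bigl(W_{x^n}(I-\Pi_2)\bigr)$, is immediately bounded by $\dim(\mathcal{H})|\mathcal{X}|/\alpha^2$ via \eqref{eq:lem:Conditional_Typical_projector_1_Tr_rho_Pi}. For the second piece, I would use the fact asserted in Lemma~\ref{lem:Conditional_Typical_projector} that $\Pi_2$ commutes with $W_{x^n}$. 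This gives $\Pi_2 W_{x^n}\Pi_2 = W_{x^n}\Pi_2$, and moreover $W_{x^n} - W_{x^n}\Pi_2 = W_{x^n}(I-\Pi_2) \succeq 0$, so $W_{x^n}\Pi_2 \preceq W_{x^n}$. Pairing with the PSD operator $I-\Pi_1$ and applying cyclicity,
$$\mathrm{Tr}\bigl(W_{x^n}\Pi_2(I-\Pi_1)\Pi_2\bigr) = \mathrm{Tr}\bigl((I-\Pi_1)\,W_{x^n}\Pi_2\bigr) \leq \mathrm{Tr}\bigl((I-\Pi_1)\,W_{x^n}\bigr),$$
which is bounded by $\dim(\mathcal{H})|\mathcal{X}|/\alpha^2$ through \eqref{eq:Conditional_Typical_projector_8_weak_law_of_large_number}. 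Adding the two bounds yields the claim.

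I do not expect a genuine obstacle here; the single delicate ingredient is the use of $[\Pi_2, W_{x^n}] = 0$ to pass from $W_{x^n}\Pi_2$ to $W_{x^n}$ in one clean operator inequality. Without this commutativity, $\Pi_2 W_{x^n} \Pi_2$ need not be dominated by $W_{x^n}$ and one would be forced back to a gentle-measurement argument, which would degrade the bound to the wrong order in $\alpha$.
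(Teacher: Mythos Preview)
Your proof is correct and follows essentially the same route as the paper's: both reduce $\mathrm{Tr}(Q_{x^n})$ to $\mathrm{Tr}(W_{x^n}\Pi_2\Pi_1\Pi_2)=\mathrm{Tr}(W_{x^n}\Pi_2\Pi_1)$ via cyclicity, split into two pieces each controlled by one of \eqref{eq:lem:Conditional_Typical_projector_1_Tr_rho_Pi} and \eqref{eq:Conditional_Typical_projector_8_weak_law_of_large_number}, and use the commutation $[\Pi_2,W_{x^n}]=0$ together with $\mathrm{Tr}(A\Pi)\le\mathrm{Tr}(A)$ for PSD $A$ to drop the extra projector. The only cosmetic difference is that the paper splits as $\Pi_2\Pi_1=\Pi_1-(I-\Pi_2)\Pi_1$ (so \eqref{eq:Conditional_Typical_projector_8_weak_law_of_large_number} handles the main term and \eqref{eq:lem:Conditional_Typical_projector_1_Tr_rho_Pi} the remainder), whereas your identity $I-\Pi_2\Pi_1\Pi_2=(I-\Pi_2)+\Pi_2(I-\Pi_1)\Pi_2$ swaps their roles.
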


Now, we have recipes to prove C-Q channel resolvability for fixed type $T$.  
We apply the quantum hypergraph soft covering in Theorem \ref{the:Single-shot_for_non-regular_MMWU} with the range of $\Pi_{W_T,\alpha\sqrt{|\mathcal{X}|}}^n$ as vertex space and edges $Q_{x^n} $ for $x^n \in \mathcal{X}_T^n.$

\begin{theorem}(C-Q channel resolvability for fixed type)\label{the:quantum_channel_resolvability_for_fixed_tyoe}
For a fixed type $T$, an input distribution $p_n^T$, stationary memoryless C-Q channel $W$, and arbitrary positive constants $\varepsilon,\tau\in(0,\frac{1}{2}), \tau_0 \in(0, \frac{1}{2}- \tau)$, 
the codebook $\mathcal{C}_n^T$ obtained by the MMWU algorithm satisfies 
\begin{align}\label{eq:lem:Applying_quantum_hypergraph_0}
    &d_{\mathrm{tr}}(W_{p_n^T}, W_{\mathcal{C}_n^T}) \notag \leq 3\varepsilon + 3\tau+ \frac{7}{2}\tau_0 +\sqrt{2\varepsilon +\tau + \tau_0 }
    \\&\quad\qquad \qquad \qquad  + 2\sqrt{\tau} + \sqrt{2\tau}
\end{align}    
as long as the size $L_n^T$ satisfies that 
\begin{align}\label{eq:lem:Applying_quantum_hypergraph_01}
    &L_n^T \geq \notag\exp{\Big(nI(T,W) + \frac{\sqrt{2}}{e\sqrt{\tau}}\dim (\mathcal{H})^{\frac{3}{2}}(|\mathcal{X}| + |\mathcal{X}|^{\frac{3}{2}}) \sqrt{n} \Big)}
    \\& \qquad \quad \times \frac{ n \ln{\dim(\mathcal{H}) }}{\varepsilon^2 \tau_0} 
\end{align}
where $I(T,W) = H(W_T) - H(W|T)$ is the Holevo information. 
\end{theorem}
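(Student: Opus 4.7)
\emph{Proof proposal.} The plan is to invoke Theorem \ref{the:Single-shot_for_non-regular_MMWU} with the family of Hermitian matrices $\{Q_{x^n}\}_{x^n \in \mathcal{X}_T^n}$ defined in \eqref{eq:edge_Q}, the vertex space $\mathcal{V}$ taken as the range of $\Pi_{W_T,\alpha\sqrt{|\mathcal{X}|}}^n$, and the input distribution $p_n^T$ playing the role of $p$. The auxiliary parameter $\alpha$ will be chosen so that the trace-lower-bound assumption \eqref{eq:assumption_Tr_rho_pi} is satisfied with the prescribed $\tau$. Once the MMWU guarantee on $Q_{p_n^T}$ versus $Q_{\mathcal{C}_n^T}$ is established, a triangle-inequality/gentle-measurement argument converts it back into a bound on $W_{p_n^T}$ versus $W_{\mathcal{C}_n^T}$.

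First I would verify the two assumptions of Theorem \ref{the:Single-shot_for_non-regular_MMWU}. For \eqref{eq:assumption_E_x}, sandwiching \eqref{eq:lem:Conditional_Typical_projector2_Tr_Pi} by the outer projector $\Pi_{W_T,\alpha\sqrt{|\mathcal{X}|}}^n$ yields $Q_{x^n} \preceq \eta\,\mathbf{1}_{\mathcal{V}}$ with $\eta = \exp\!\bigl(-nH(W|T)+\tfrac{1}{e}\dim(\mathcal{H})|\mathcal{X}|\alpha\sqrt{n}\bigr)$. For \eqref{eq:assumption_Tr_rho_pi}, averaging Lemma \ref{lem:Tr_Q_bound} against $p_n^T$ gives $\mathrm{Tr}(Q_{p_n^T}) \geq 1-2\dim(\mathcal{H})|\mathcal{X}|/\alpha^2$, so setting $\alpha^2 = 2\dim(\mathcal{H})|\mathcal{X}|/\tau$ makes this at least $1-\tau$.

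Next I would combine these with Lemma \ref{lem:Typical_projector} to convert the size condition \eqref{eq:the:Single-shot_for_non-regular_MMWU_L_size} into \eqref{eq:lem:Applying_quantum_hypergraph_01}. Since $\dim(\mathcal{V}) \leq \exp\!\bigl(nH(W_T)+\tfrac{1}{e}\dim(\mathcal{H})\alpha\sqrt{|\mathcal{X}|}\sqrt{n}\bigr)$ by \eqref{eq:lem:Typical_projector_2_Tr_Pi}, the product $\eta\dim(\mathcal{V})$ factors as $\exp(nI(T,W))$ times a subexponential correction; substituting the chosen $\alpha$ collects the correction into the stated $\frac{\sqrt{2}}{e\sqrt{\tau}}\dim(\mathcal{H})^{3/2}(|\mathcal{X}|+|\mathcal{X}|^{3/2})\sqrt{n}$ exponent. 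The additional $\ln\dim(\mathcal{V})$ factor is bounded by $n\ln\dim(\mathcal{H})$ since $\mathcal{V}$ is a subspace of $\mathcal{H}^{\otimes n}$.

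Finally, Theorem \ref{the:Single-shot_for_non-regular_MMWU} gives the bound \eqref{eq:the:Single-shot_for_non-regular_MMWU} for $d_{\mathrm{tr}}(Q_{p_n^T}, Q_{\mathcal{C}_n^T})$, and two applications of the triangle inequality reduce the goal to controlling $d_{\mathrm{tr}}(W_{p_n^T},Q_{p_n^T})$ and $d_{\mathrm{tr}}(W_{\mathcal{C}_n^T},Q_{\mathcal{C}_n^T})$. For each $x^n \in \mathcal{X}_T^n$, by writing $W_{x^n} - Q_{x^n} = (W_{x^n} - \Pi_{W_T,\alpha\sqrt{|\mathcal{X}|}}^n W_{x^n} \Pi_{W_T,\alpha\sqrt{|\mathcal{X}|}}^n) + \Pi_{W_T,\alpha\sqrt{|\mathcal{X}|}}^n(W_{x^n}-\Pi_{W,\alpha}^n(x^n)W_{x^n}\Pi_{W,\alpha}^n(x^n))\Pi_{W_T,\alpha\sqrt{|\mathcal{X}|}}^n$ and applying the gentle measurement lemma \cite[Lemma 9.4.2]{Wilde_2018} to each piece using \eqref{eq:Conditional_Typical_projector_8_weak_law_of_large_number} and \eqref{eq:lem:Conditional_Typical_projector_1_Tr_rho_Pi} with our choice of $\alpha$, one obtains an $O(\sqrt{\tau})$ bound uniform in $x^n$; convexity of the trace norm then transfers this bound to the mixtures $W_{p_n^T}$ and $W_{\mathcal{C}_n^T}$, producing the residual terms $2\sqrt{\tau}+\sqrt{2\tau}$ in \eqref{eq:lem:Applying_quantum_hypergraph_0}. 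The hard part of the argument is this last bridging step: the double-sandwich structure of $Q_{x^n}$ prevents a single direct application of gentle measurement, and the intermediate object is subnormalized, so the decomposition has to be arranged so that each piece is amenable to gentle measurement applied to a normalized state.
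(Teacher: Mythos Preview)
Your proposal is correct and follows essentially the same route as the paper: verify \eqref{eq:assumption_E_x} and \eqref{eq:assumption_Tr_rho_pi} for the edges $Q_{x^n}$ with the same choices of $\eta$ and $\alpha$, invoke Theorem~\ref{the:Single-shot_for_non-regular_MMWU}, bound $\dim(\mathcal{V})$ and $\ln\dim(\mathcal{V})$ via Lemma~\ref{lem:Typical_projector} and the trivial inclusion $\mathcal{V}\subseteq\mathcal{H}^{\otimes n}$, and then bridge from $Q$ back to $W$ by a two-stage gentle-measurement argument plus convexity. The only difference is in the bridging decomposition: the paper goes $W_{x^n}\to \Pi_{W,\alpha}^n(x^n)W_{x^n}\Pi_{W,\alpha}^n(x^n)\to Q_{x^n}$, applying gentle measurement in the second step directly to the subnormalized intermediate state using \eqref{eq_Tr_Q}, whereas you go $W_{x^n}\to \Pi_{W_T,\alpha\sqrt{|\mathcal{X}|}}^n W_{x^n}\Pi_{W_T,\alpha\sqrt{|\mathcal{X}|}}^n\to Q_{x^n}$ and handle the second step by pinching contractivity of the trace norm. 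Both orderings work and recover (at least) the residual $2\sqrt{\tau}+\sqrt{2\tau}$; your worry about subnormalization is unnecessary since Wilde's Lemma~9.4.2 already covers subnormalized operators, which is exactly what the paper exploits.
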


\begin{proof}
     We verify that the edge $Q_{x^n}$ in \eqref{eq:edge_Q} satisfies the assumption \eqref{eq:assumption_E_x} and \eqref{eq:assumption_Tr_rho_pi}. 
     Clearly, $ \Pi_{W,\alpha}^n(x^n) \preceq I^{\otimes n}$ for identity matrix $I^{\otimes n}$. 
     Fix $\eta$ as \begin{align}\label{eq:lem:Applying_quantum_hypergraph_1}
     \eta = \exp\big(-nH(W|T) + \frac{1}{e}\dim (\mathcal{H}) \mathcal{|X|}\alpha \sqrt{n} \big),
     \end{align}
     and by multiplying $\Pi_{W_T, \alpha\sqrt{|\mathcal{X}|}}^n =\mathbf{1}_{\mathcal{V}}$ to \eqref{eq:lem:Conditional_Typical_projector2_Tr_Pi} from both sides, 
     we get assumption \eqref{eq:assumption_E_x}.
     Next, we verify $\mathrm{Tr}(Q_{p_n^T}) \geq 1 - \tau$. From \eqref{eq_Tr_Q}, we get
     \begin{equation}\label{eq:lem:Applying_quantum_hypergraph_2}
         \mathrm{Tr}(Q_{p_n^T}) = \sum_{x^n \in \mathcal{X}_T^n}p_{n}^T(x^n) \mathrm{Tr}(Q_{x^n}) \geq 1- \frac{2\dim(\mathcal{H})\mathcal{{|X|}}}{\alpha^2}. 
     \end{equation}
     Fix $\alpha = \sqrt{\frac{2\dim(\mathcal{H})\mathcal{{|X|}}}{\tau}},$ we get assumption \eqref{eq:assumption_Tr_rho_pi}. 
     Next, we shall get an upper bound of $\dim(\mathcal{V})$. Because the vertex space is the range of $\Pi_{W_T, \alpha\sqrt{|\mathcal{X}|}}^n$, 
     \eqref{eq:lem:Typical_projector_2_Tr_Pi} suggests that 
     \begin{align}\label{eq:lem:Applying_quantum_hypergraph_3}
         \dim (\mathcal{V})
         \leq \exp\Big(n H(W_T) + \frac{1}{e}\dim(\mathcal{H}) \sqrt{\mathcal{|X|}} \alpha  \sqrt{n}  \Big),
     \end{align}
     Trivially, we also get $\dim (\mathcal{V}) \leq n \dim (\mathcal{H}).$
    Applying Theorem \ref{the:Single-shot_for_non-regular_MMWU} to \eqref{eq:lem:Applying_quantum_hypergraph_1}, \eqref{eq:lem:Applying_quantum_hypergraph_2} and \eqref{eq:lem:Applying_quantum_hypergraph_3}, 
     we get \eqref{eq:lem:Applying_quantum_hypergraph_01}. 
     
     Here, using gentle measurement lemma (e.g., see \cite[Lemma 9.4.2]{Wilde_2018}), for  \eqref{eq:lem:Conditional_Typical_projector_1_Tr_rho_Pi} \eqref{eq_Tr_Q} with $\alpha = \sqrt{\frac{2\dim(\mathcal{H})\mathcal{{|X|}}}{\tau}} $, we get $\| Q_{x^n} - W_{x^n} \|_1  \leq 2\sqrt{\tau }+\sqrt{2\tau}$ and 
    \begin{align*}
        \| Q_{p_n^T} - W_{p_n^T} \|_1 &\leq \sum_{x^n\in \mathcal{X}_T^n} p_{n}^T(x^n) \|Q_{x^n} - W_{x^n}\|_1 \\&\leq 2\sqrt{\tau}+\sqrt{2\tau}.
    \end{align*}
   Similarly, $\|Q_{\mathcal{C}_n^T} - W_{\mathcal{C}_n^T} \|_1 \leq \sum_{l=1}^{L_n^T}\|Q_{x_l^n} - W_{x_l^n}  \|_1 \leq 2\sqrt{\tau}+\sqrt{2\tau}. $ 
    Thus, using the triangle inequality in \eqref{eq:the:Single-shot_for_non-regular_MMWU}, we get \eqref{eq:lem:Applying_quantum_hypergraph_0}.
    
\end{proof}

By applying Theorem \ref{the:quantum_channel_resolvability_for_fixed_tyoe} to each type, we can prove C-Q channel resolvability for a general input distribution $p_n$. 
Since the argument is the same as classical case, we only explain the outline in the following; see \cite[Section 6]{takahashi2025channelresolvabilityusingmultiplicative} for detail. 
Let $\mathcal{T}$ be the set of all types and $p_\mathcal{T}$ be the type distribution such that $p_\mathcal{T}(T) = p_n(\mathcal{X}_T^n)$. Then, $p_n$ decomposes into input distributions $p^T_n$ on type classes $\mathcal{X}_T^n$ with weight $p_\mathcal{T}$. 
We can simulate the type distribution $p_\mathcal{T}$ by using classical MWU algorithm in \cite{takahashi2025channelresolvabilityusingmultiplicative} with randomness of size in $\ln n$ order. For each type $T$, we select a codebook according to Theorem \ref{the:quantum_channel_resolvability_for_fixed_tyoe}. 
By appropriately choosing the parameters, we can prove the C-Q channel resolvability for general input distribution. 


\begin{theorem}(C-Q channel resolvability for general input distribution) \label{the:quantum_channel_resolvability_}
For arbitrary positive $\xi,\kappa$,  for sufficiently large $n$, and for every general input distribution $p_n$,   by using the MMWU algorithm, 
we can construct a codebook $\mathcal{C}_n$ such that $d_{\mathrm{tr}}(W_{p_n}, W_{\mathcal{C}_n}) \leq\xi,$ 
as long as the size $L_n$ satisfies that $L_n \geq \exp\big(n(C(W)+\kappa)\big)$
where $C(W)$ is the Holevo capacity. 
\end{theorem}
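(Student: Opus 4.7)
The plan is to follow exactly the outline sketched after Theorem \ref{the:quantum_channel_resolvability_for_fixed_tyoe}: decompose $p_n$ by type classes, simulate each type class with the MMWU-constructed codebook from Theorem \ref{the:quantum_channel_resolvability_for_fixed_tyoe}, simulate the induced type distribution with the classical MWU result of \cite{takahashi2025channelresolvabilityusingmultiplicative}, and concatenate. Concretely, let $\mathcal{T}$ be the set of types of length $n$ on $\mathcal{X}$, define $p_{\mathcal{T}}(T)=p_n(\mathcal{X}_T^n)$, and let $p_n^T$ be the conditional input distribution on $\mathcal{X}_T^n$, so that $W_{p_n}=\sum_{T\in\mathcal{T}} p_{\mathcal{T}}(T)\,W_{p_n^T}$. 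By the classical MWU result applied to $p_{\mathcal{T}}$, for any $\xi_1>0$ we can deterministically pick a multiset $\{T_1,\dots,T_M\}\subset\mathcal{T}$, with $M$ at most polynomial in $n$ (since $|\mathcal{T}|\le (n+1)^{|\mathcal{X}|}$), whose empirical distribution $\hat p_{\mathcal{T}}$ satisfies $\tfrac12\|\hat p_{\mathcal{T}}-p_{\mathcal{T}}\|_1\le \xi_1$.

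Next I would apply Theorem \ref{the:quantum_channel_resolvability_for_fixed_tyoe} to each selected type $T_i$, with parameters $\varepsilon,\tau,\tau_0$ chosen as functions of $\xi$ alone (not of $n$) so that the bound $3\varepsilon+3\tau+\tfrac72\tau_0+\sqrt{2\varepsilon+\tau+\tau_0}+2\sqrt\tau+\sqrt{2\tau}$ is at most $\xi-\xi_1$. This yields, for each $T_i$, a codebook $\mathcal{C}_n^{T_i}$ of size
\begin{equation*}
L_n^{T_i}\le \exp\!\Big(nI(T_i,W)+c_1\sqrt n\Big)\cdot\frac{n\ln\dim(\mathcal{H})}{\varepsilon^2\tau_0},
\end{equation*}
where $c_1$ depends only on $|\mathcal{X}|,\dim\mathcal{H},\tau$. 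The overall codebook is the concatenation $\mathcal{C}_n=\bigsqcup_{i=1}^{M} \mathcal{C}_n^{T_i}$ (each copy weighted equally), so that $W_{\mathcal{C}_n}=\sum_i \hat p_{\mathcal{T}}(T_i)\,W_{\mathcal{C}_n^{T_i}}$. Triangle inequality gives
\begin{equation*}
d_{\mathrm{tr}}(W_{p_n},W_{\mathcal{C}_n})\le \sum_i p_{\mathcal{T}}(T_i)\,d_{\mathrm{tr}}(W_{p_n^{T_i}},W_{\mathcal{C}_n^{T_i}})+\xi_1\le \xi.
\end{equation*}

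For the size bound, $|\mathcal{C}_n|=\sum_i L_n^{T_i}\le M\cdot \max_T L_n^T$. Since the number of types and the factor $n\ln\dim(\mathcal{H})/(\varepsilon^2\tau_0)$ are polynomial in $n$, and the $\sqrt n$ correction is sub-exponential, we obtain
\begin{equation*}
|\mathcal{C}_n|\le \exp\!\Big(n\max_{T\in\mathcal{T}} I(T,W)+o(n)\Big)\le \exp\!\big(n(C(W)+\kappa)\big)
\end{equation*}
for sufficiently large $n$, where the last step uses $\max_{T} I(T,W)\le \max_{p\in\mathcal{P(X)}} I(p,W)=C(W)$ (with vanishing loss from restricting the maximum to types, by continuity of $I(\cdot,W)$).

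\textbf{Main obstacle.} The delicate step is the parameter bookkeeping: $\varepsilon,\tau,\tau_0$ must be chosen as constants depending only on $\xi$, while $\alpha=\sqrt{2\dim(\mathcal{H})|\mathcal{X}|/\tau}$ inside Theorem \ref{the:quantum_channel_resolvability_for_fixed_tyoe} must remain independent of $n$ so that the correction $c_1\sqrt n$ in the exponent is genuinely $o(n)$, and the polynomial prefactor $M\cdot n\ln\dim(\mathcal{H})/(\varepsilon^2\tau_0)$ can be absorbed into $\exp(n\kappa)$. Handling the simulation of the type distribution is the only genuinely new ingredient beyond Theorem \ref{the:quantum_channel_resolvability_for_fixed_tyoe}, but as the authors note this is already treated by the classical MWU argument of \cite{takahashi2025channelresolvabilityusingmultiplicative}, so the proof should reduce to invoking that result and verifying the above parameter tuning.
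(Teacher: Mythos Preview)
Your proposal is correct and follows exactly the outline the paper gives after Theorem~\ref{the:quantum_channel_resolvability_for_fixed_tyoe}: decompose $p_n$ by types, simulate the type distribution via the classical MWU result of \cite{takahashi2025channelresolvabilityusingmultiplicative}, apply Theorem~\ref{the:quantum_channel_resolvability_for_fixed_tyoe} on each type class, and absorb the polynomial-in-$n$ overhead into $\exp(n\kappa)$. One small bookkeeping fix: for the identity $W_{\mathcal{C}_n}=\sum_i \hat p_{\mathcal{T}}(T_i)\,W_{\mathcal{C}_n^{T_i}}$ to hold with $\hat p_{\mathcal{T}}$ the empirical distribution of the multiset $\{T_1,\dots,T_M\}$, the per-type codebooks must all have the \emph{same} size; since Theorem~\ref{the:quantum_channel_resolvability_for_fixed_tyoe} only imposes a lower bound on $L_n^T$, simply take every $L_n^{T_i}$ equal to the common value $\exp\!\big(nC(W)+c_1\sqrt n\big)\cdot n\ln\dim(\mathcal H)/(\varepsilon^2\tau_0)$, which leaves your size estimate unchanged.
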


For an i.i.d. input distribution, since types concentrates around the input distribution, we can show the following theorem from Theorem \ref{the:quantum_channel_resolvability_for_fixed_tyoe}. Since the proof is the same as classical case, we omit the proof; see \cite{takahashi2025channelresolvabilityusingmultiplicative} for details. 
\begin{theorem}(For i.i.d. input distribution)
    For arbitrary positive $\xi,\kappa$,  for sufficiently large $n$, and for an i.i.d. input distribution $p_n =p^n$, by using the MMWU algorithm, 
    we can construct a codebook $\mathcal{C}_n$ such that $d_{\mathrm{tr}}(W_{p_n}, W_{\mathcal{C}_n}) \leq\xi,$ 
        as long as the size $L_n$ satisfies that $L_n \geq \exp\big(n(I(p,W)+\kappa)\big)$
    where $I(p,W)$ is the Holevo information. 
\end{theorem}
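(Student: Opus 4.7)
The strategy is to decompose the i.i.d.\ distribution $p^n$ into its type components and exploit the fact that types concentrate sharply around the empirical law $p$. Write $p^n = \sum_{T\in\mathcal{T}_n} p_\mathcal{T}(T)\, p_n^T$, where $p_\mathcal{T}(T) = p^n(\mathcal{X}_T^n)$ and $p_n^T$ is $p^n$ conditioned on $\mathcal{X}_T^n$. Theorem \ref{the:quantum_channel_resolvability_for_fixed_tyoe} gives a per-type codebook $\mathcal{C}_n^T$ whose size is controlled by $I(T,W)$; the key improvement over Theorem \ref{the:quantum_channel_resolvability_} is that for an i.i.d.\ input we only need to pay for types near $p$, not the worst case.

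\textbf{Type concentration and continuity.} Fix a small $\delta>0$ to be chosen later, and let $\mathcal{T}_\delta \subseteq \mathcal{T}_n$ be the set of types with $\|T-p\|_1 \leq \delta$. By the standard method-of-types bound (Sanov), $p_\mathcal{T}(\mathcal{T}_\delta^c) \leq (n+1)^{|\mathcal{X}|}\exp(-n c(\delta))$ for some $c(\delta)>0$, so discarding the atypical types perturbs $W_{p^n}$ by at most an exponentially small amount in trace distance. By continuity of the Holevo mutual information $I(\cdot, W)$ as a function of the input distribution on the finite alphabet $\mathcal{X}$ with finite-dimensional output, there exists $\rho(\delta)$ with $\rho(\delta)\to 0$ as $\delta\to 0$ such that $|I(T,W) - I(p,W)| \leq \rho(\delta)$ for all $T\in\mathcal{T}_\delta$. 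Choose $\delta$ so that $\rho(\delta) \leq \kappa/2$.

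\textbf{Assembling the codebook.} For each $T\in\mathcal{T}_\delta$, apply Theorem \ref{the:quantum_channel_resolvability_for_fixed_tyoe} with $\varepsilon,\tau,\tau_0$ tuned (uniformly in $T$) so that the right-hand side of \eqref{eq:lem:Applying_quantum_hypergraph_0} is at most $\xi/2$. This produces $\mathcal{C}_n^T$ with
\[
|\mathcal{C}_n^T| \leq \exp\!\bigl(n I(T,W) + O(\sqrt{n})\bigr) \leq \exp\!\bigl(n(I(p,W)+\kappa/2) + O(\sqrt{n})\bigr).
\]
As sketched in the paragraph preceding Theorem \ref{the:quantum_channel_resolvability_}, the restricted type distribution $p_\mathcal{T}\!\restriction_{\mathcal{T}_\delta}$ can itself be simulated by the classical MWU algorithm of \cite{takahashi2025channelresolvabilityusingmultiplicative} using a selector codebook of size $O(\ln n /\xi^2)$, since $|\mathcal{T}_\delta|\leq (n+1)^{|\mathcal{X}|}$ is polynomial. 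Concatenating the type selector with the per-type codebooks and using the triangle inequality on $W_{p^n} = \sum_T p_\mathcal{T}(T) W_{p_n^T}$ yields $\mathcal{C}_n$ of total size at most $|\mathcal{T}_\delta|\cdot \exp(n(I(p,W)+\kappa/2)+O(\sqrt{n}))$, which is bounded by $\exp(n(I(p,W)+\kappa))$ for all sufficiently large $n$. The simulation error is at most $\xi/2$ from the per-type bounds plus exponentially small contributions from $\mathcal{T}_\delta^c$ and the classical selector, hence $\leq \xi$.

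\textbf{Main obstacle.} The principal technical point is that all the sub-leading factors (the $\sqrt{n}$ corrections appearing in \eqref{eq:lem:Applying_quantum_hypergraph_01}, the polynomial type count $|\mathcal{T}_n|\leq(n+1)^{|\mathcal{X}|}$, and the classical selector overhead) must be absorbed jointly into the $\exp(n\kappa/2)$ slack, while the free parameters $\varepsilon,\tau,\tau_0,\alpha$ are chosen uniformly over $T\in\mathcal{T}_\delta$ so that the trace-distance bound \eqref{eq:lem:Applying_quantum_hypergraph_0} is at most $\xi/2$ simultaneously for every such type. Continuity of $I(\cdot,W)$ on the probability simplex over a finite alphabet is standard, but it is essential here in order to replace the type-dependent rate $I(T,W)$ by the target $I(p,W)$; this is exactly the step that fails for arbitrary input sequences and forces Theorem \ref{the:quantum_channel_resolvability_} to settle for $C(W)$.
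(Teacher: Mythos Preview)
Your proposal is correct and follows essentially the same route the paper indicates: the paper omits the proof, stating only that it follows from Theorem \ref{the:quantum_channel_resolvability_for_fixed_tyoe} via type concentration around $p$ exactly as in the classical companion paper \cite{takahashi2025channelresolvabilityusingmultiplicative}. Your decomposition into type classes, Sanov-type concentration onto $\mathcal{T}_\delta$, continuity of $I(\cdot,W)$ to replace $I(T,W)$ by $I(p,W)$, and absorption of the $O(\sqrt{n})$ and polynomial-in-$n$ overheads into the $\exp(n\kappa/2)$ slack are precisely the ingredients the paper has in mind.

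One minor remark: your size bound is written as $|\mathcal{T}_\delta|\cdot\exp(\cdots)$, whereas the construction you describe (classical MWU selector of size $O(\ln n/\xi^2)$ concatenated with per-type codebooks of a common size) actually gives selector-size times per-type-size; either way the prefactor is polynomial in $n$ and is swallowed by $\exp(n\kappa/2)$, so the conclusion is unaffected.
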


\section*{Acknowledgment}
This work was supported in part by the Japan Society for the Promotion of Science (JSPS) KAKENHI under Grant 23H00468 and 23K17455 and by JST, CRONOS, Japan Grant Number JPMJCS25N5.

\newpage

\appendix
\subsection{Proof of Lemma \ref{lem:MMWU_Algorithm}}
\label{Proof_Lemma2}
    We shall calculate the upper bound and lower bound of $\mathrm{Tr}(G(L+1))$. We calculate the upper bound first. Here, we have 
    \begin{align}\label{eq:appe:1}
            &\mathrm{Tr}\big(G(L+1)\big) \notag \\&\quad= \mathrm{Tr}\bigg(\exp{\bigg(-\varepsilon \sum_{l=1}^L M(x_{l}) \bigg)}\bigg) \notag
            \\&\quad \leq \mathrm{Tr}\bigg(\exp \bigg(-\varepsilon \sum_{l=1}^{L-1} M(x_{l}) \bigg) \exp\big(-\varepsilon M(x_L)\big)\bigg) \notag
            \\&\quad= \mathrm{Tr}\Big(G(L) \exp{\big(-\varepsilon M(x_L)\big)}\Big),
    \end{align}
    where the inequality follows from the Golden-Thompson inequality \cite{Golden65,Thompson_65} that for given Hermitian matrices $A,B$, we get $\mathrm{Tr}(e^{A+B})\leq \mathrm{Tr}(e^A e^B).$
    Define $\varepsilon_1:= 1-e^{-\varepsilon}.$ 
    Note that, for $0 \preceq A \preceq I$ and $0\leq \varepsilon$, we get
    \begin{equation}\label{eq:appe:nontrivial_1}
        \exp(-\varepsilon A)\preceq I - \varepsilon_1 A.
    \end{equation}
    We provide a proof of inequality \eqref{eq:appe:nontrivial_1} with Lemma \ref{Matrix_function} in Appendix \ref{claim:A}.   
    From \eqref{eq:appe:nontrivial_1}, we get
    \begin{equation*}
        \exp{\big(-\varepsilon M(x_L)\big)} \preceq I-\varepsilon_1 M(x_L).
    \end{equation*}
    Thus, we rewrite \eqref{eq:appe:1} as 
    \begin{equation*}\label{eq:appe:2}
        \begin{split}
            \mathrm{Tr}\Big(G(L+1)\Big) \leq  \mathrm{Tr}\Big(G(L) -  \varepsilon_1 G(L) M(x_L)\Big).
        \end{split}
    \end{equation*}
    We continue to calculate and get
    \begin{align}\label{eq:appe:3}
            \mathrm{Tr}\Big(G(L+1)\Big) &\leq  \mathrm{Tr}\Big(G(L)\Big) 
            - \mathrm{Tr}\Big(\varepsilon_1  G(L) M(x_L)\Big) \notag
            \\&= \mathrm{Tr}\big(G(L)\big) \Big(1- \varepsilon_1\mathrm{Tr}\Big(F(L) M(x_L)) \Big)\notag
            \\& \leq \mathrm{Tr}\Big(G(L)\Big) \exp \bigg(- \varepsilon_1\mathrm{Tr}\Big(F(L) M(x_L)\Big) \bigg),
    \end{align}
    where the equality follows from $$G(l) = {F(l)\mathrm{Tr}(G(l))},$$ in Algorithm \ref{algo:MMWU} and the last inequality follows from $1+z \leq e^z$ for any real number $z$. Calculate \eqref{eq:appe:3} recursively, the upper bound is 
    \begin{align} \label{eq:appe:upperbound}
            &\mathrm{Tr}\Big(G(L+1)\Big) \notag
            \\& \quad \leq \mathrm{Tr}\Big(G(L) \Big)\exp \bigg(-\varepsilon_1 \mathrm{Tr}\Big(F(L)M(x_L)\Big) \bigg) \notag 
            \\& \quad\leq \mathrm{Tr}\Big(G(L-1) \Big)
            \exp \bigg(-\varepsilon_1 \sum_{l=L-1}^L \mathrm{Tr}\Big( F(l) M(x_l)\Big) \bigg) \notag 
            \\&\quad \leq   \cdots \notag
            \\& \quad\leq \mathrm{Tr}\big(G(1) \big)\exp \bigg(-\varepsilon_1 \sum_{l=1}^L\mathrm{Tr}\Big( F(l) M(x_l)\Big) \bigg) \notag
            \\&\quad = \dim (\mathcal{H}) \exp \bigg(-\varepsilon_1 \sum_{l=1}^L\mathrm{Tr}\Big( F(l) M(x_l)\Big) \bigg),
    \end{align}
    where the last equality follows from $\mathrm{Tr}\big(G(1)\big) = \mathrm{Tr}(I) = \dim (\mathcal{H}).$
    On the other hand, the lower bound is, for every $i \in \{1,2,..., \dim (\mathcal{H})\}$,
    \begin{align}\label{eq:appe:lowerbound}
            \mathrm{Tr}\Big( G(L+1)\Big) &= \mathrm{Tr}\bigg(\exp{\bigg(-\varepsilon \sum_{l=1}^L M(x_l)\bigg)}\bigg) \notag
            \\ & \geq \exp{\bigg(-\varepsilon \lambda_i \bigg(\sum_{l=1}^L M(x_l)\bigg)\bigg)},
    \end{align}
  where the inequality follows that $$\mathrm{Tr}(\exp(A))= \sum_{i=1}^{\dim (\mathcal{H})}\exp(\lambda_i(A))\geq \exp(\lambda_i(A)),$$ for $i$-th eigenvalue $\lambda_i(A)$ of Hermitian matrix $A$.
    Combining the upper bound in \eqref{eq:appe:upperbound} and the lower bound in \eqref{eq:appe:lowerbound}, we get
    \begin{equation*}
        \begin{split}
            &\exp\bigg(-\varepsilon \lambda_i\bigg(\sum_{l=1}^L M(x_l)\bigg) \bigg) 
            \\&\quad\leq 
        \dim (\mathcal{H}) \exp \bigg( -\varepsilon_1 \sum_{l=1}^L \mathrm{Tr}\Big(M(x_l)\Big)\bigg),
        \end{split} 
    \end{equation*}
    for every $i\in\{1,2,...,\dim (\mathcal{H})\}$. Simplify the above equation with
    \begin{equation}
    \label{eq:appe:nontrivial_2}
        \varepsilon_1 = 1 - e^{-\varepsilon} \geq \varepsilon(1-\varepsilon),
    \end{equation}
    which holds for $\frac{1}{2}\geq \varepsilon \geq 0$; we provide a proof of inequality \eqref{eq:appe:nontrivial_2} in Appendix \ref{claim:B}. And taking the natural logarithm and dividing by $-\varepsilon \neq 0$, we get
    \begin{equation*}
        \begin{split}
            &\lambda_i\bigg(\sum_{l = 1}^L M(x_l)\bigg) 
             \\&\quad\geq (1-\varepsilon) \sum_{l=1}^L\bigg(\mathrm{Tr} \Big(F(l)M(x_l)\Big) \bigg) - \frac{\ln\dim (\mathcal{H})}{\varepsilon}.
        \end{split}
    \end{equation*}
    Now, for every unit vector $\ket{\psi} \in\mathcal{H}$, we get $$\min_i \lambda_i \bigg(\sum_{l=1}^LM(x_l)\bigg) \leq   \sum_{l=1}^L\bra{\psi}M(x_l)\ket{\psi}.$$
\qed

\subsubsection{Proof of nontrivial inequality \eqref{eq:appe:nontrivial_1}}
    \label{claim:A}
    We use useful lemma; see also \cite[Chapter 3, Lemma 1]{Kale07}.
    \begin{lemma}(Matrix and function)\label{Matrix_function}
    Let $f, g: \mathbb{R\rightarrow R},$ and suppose that the inequality $f(x)\geq g(x)$ holds for $x\in D$ for some $D\subset \mathbb{R}$. Then for every symmetric matrix $A$ all of whose eigenvalues lie in $D$, we have $f(A)\succeq g(A)$.
    \end{lemma}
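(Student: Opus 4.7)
The plan is to prove the lemma via the spectral theorem and the standard definition of matrix functions through functional calculus. Since $A$ is symmetric (Hermitian), I would first diagonalize $A = U \Lambda U^{*}$ where $U$ is unitary and $\Lambda = \mathrm{diag}(\lambda_1, \ldots, \lambda_n)$ with all eigenvalues $\lambda_i \in D$ by hypothesis. The matrix functions are then defined by $f(A) = U f(\Lambda) U^{*}$ and $g(A) = U g(\Lambda) U^{*}$, where $f(\Lambda)$ and $g(\Lambda)$ are the diagonal matrices with entries $f(\lambda_i)$ and $g(\lambda_i)$ respectively.

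Next, I would use the pointwise hypothesis $f(x) \geq g(x)$ on $D$ applied coordinatewise to the eigenvalues: since each $\lambda_i \in D$, we get $f(\lambda_i) - g(\lambda_i) \geq 0$ for all $i$. Therefore the diagonal matrix $f(\Lambda) - g(\Lambda)$ has only non-negative entries on the diagonal and zeros elsewhere, which makes it positive semi-definite.

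Finally, I would conjugate by $U$ and use the standard fact that conjugation by a unitary preserves positive semi-definiteness: for any vector $\ket{v}$, setting $\ket{w} = U^{*}\ket{v}$,
\begin{equation*}
\bra{v}\bigl(f(A) - g(A)\bigr)\ket{v} = \bra{w}\bigl(f(\Lambda) - g(\Lambda)\bigr)\ket{w} = \sum_i (f(\lambda_i) - g(\lambda_i)) |w_i|^2 \geq 0,
\end{equation*}
which gives $f(A) \succeq g(A)$ as required.

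I do not expect a significant obstacle here; the only minor subtlety is confirming that the matrix functions $f(A)$ and $g(A)$ used in the applications (e.g., $\exp(-\varepsilon A)$ and $I - \varepsilon_1 A$ in \eqref{eq:appe:nontrivial_1}) are indeed interpreted through this spectral functional calculus, but this is the standard convention and the polynomial/entire cases used in the paper fall within it automatically. So the proof is essentially a one-line appeal to the spectral theorem plus unitary invariance of positive semi-definiteness.
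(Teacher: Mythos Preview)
Your proof is correct and follows essentially the same approach as the paper: diagonalize $A$, apply the scalar inequality $f(\lambda_i)\ge g(\lambda_i)$ on each eigenvalue to obtain a non-negative diagonal matrix, and conclude by unitary invariance of positive semi-definiteness. The paper's version is slightly terser (it does not write out the quadratic-form verification), but the argument is identical.
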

    \begin{proof}
        Let $A = UDU^T$ be the diagonalization of $A$. Then $f(A)-g(A) = U(f(D)-g(D))U^T,$ and since all eigenvalues of $A$ are in $D$, $f(D) -g(D)$ is diagonal matrix with non-negative matrix. Thus, $f(A)-g(A)$ is a positive semi-definite matrix, and hence $f(A)\succeq g(A)$.
    \end{proof}
    We define 
    $$f(\varepsilon) := \exp{(-\varepsilon z)} - (1 - (1-\exp{(-\varepsilon)}) z).$$
    Taking the derivatives of $f(\varepsilon)$ with respect to $\varepsilon$, we get
    $$f'(\varepsilon) = - z\exp{(-\varepsilon z)} + z\exp{(-\varepsilon)} \leq 0,$$
    where the inequality follows from $\exp{(-\varepsilon z)} \geq \exp{(-\varepsilon)}$ for $0\leq z \leq 1$ and $\varepsilon\geq0.$
    We get
    $$f(\varepsilon)\leq f(0) = 0.$$
    Thus, we get the inequality such that
    $$\exp{(-\varepsilon z)} \leq  1 - (1-\exp{(-\varepsilon)}) z.$$
    Applying Lemma \ref{Matrix_function} to this equation, we get $$\exp{(-\varepsilon A)} \leq  1 - (1-\exp{(-\varepsilon)}) A,$$
    for matrix $A$ which satisfies $0\preceq A\preceq I$.
\qed

\subsubsection{Proof of nontrivial inequality \eqref{eq:appe:nontrivial_2}}
\label{claim:B}
We define $g(\varepsilon)$  as 
$$g(\varepsilon) := 1- \exp{(-\varepsilon)} - \varepsilon(1-\varepsilon).$$
Taking the first and second derivatives of $g(\varepsilon)$ with respect to $\varepsilon$, we get
$$g'(\varepsilon) = \exp{(-\varepsilon)} -1 + 2\varepsilon,$$
and
$$g''(\varepsilon) = -\exp{(-\varepsilon)} + 2.$$
For $\frac{1}{2}\geq \varepsilon \geq 0$, we get $g''(\varepsilon)>0.$
Thus, we get 
$$g'(\varepsilon) \geq g'(0) = 0,$$
and
$$g(\varepsilon)\geq g(0) = 0.$$
Then, we get the inequality such that
$$ 1 - \exp{(-\varepsilon)} \geq \varepsilon(1-\varepsilon).$$
\qed

\subsection{Proof of Lemma \ref{lem:Tr_Q_bound}}
\label{Proof:Bound_of_Q_Trace}
    We calculate and get
    \begin{align}\label{eq_Tr_Q_0}
        \mathrm{Tr}(Q_{x^n}) &= \mathrm{Tr}\Big(\Pi_{W,\alpha}^n(x^n) W_{x^n} \Pi_{W,\alpha}^n(x^n)\Pi_{W_T, \alpha\sqrt{|\mathcal{X}|}}^n\Big) \notag
        \\&= \mathrm{Tr}\Big( W_{x^n} \Pi_{W,\alpha}^n(x^n)\Pi_{W_T, \alpha\sqrt{|\mathcal{X}|}}^n\Big) \notag
        \\&= \mathrm{Tr}\Big(W_{x^n}\Pi_{W_T, \alpha\sqrt{|\mathcal{X}|}}^n\Big) \notag
        \\& \quad- \mathrm{Tr}\Big( W_{x^n} \Big(I^{\otimes n} - \Pi_{W,\alpha}^n(x^n) \Big)\Pi_{W_T, \alpha\sqrt{|\mathcal{X}|}}^n\Big)
    \end{align}
    where the second equality follows that $\Pi_{W,\alpha}^n(x^n)$ commutes to $W_{x^n}$. Because of \eqref{eq:Conditional_Typical_projector_8_weak_law_of_large_number}, we get
    \begin{align}\label{eq_Tr_Q_1}
        \mathrm{Tr}\Big(W_{x^n}\Pi_{W_T, \alpha\sqrt{|\mathcal{X}|}}^n\Big) \geq 1- \frac{\dim(\mathcal{H}) \mathcal{|X|}}{\alpha^2}.
    \end{align}
    Note that for semi-definite matrix $A$ and projector $\Pi$ the equation holds $\mathrm{Tr}(A \Pi) \leq \mathrm{Tr}(A)$. 
    Thus, we get 
    \begin{align}\label{eq_Tr_Q_2}
        &\mathrm{Tr}\Big( W_{x^n} \Big(I^{\otimes n} - \Pi_{W,\alpha}^n(x^n) \Big)\Pi_{W_T, \alpha\sqrt{|\mathcal{X}|}}^n\Big) \notag
        \\&\qquad = \mathrm{Tr}\Big(\Big(I^{\otimes n} - \Pi_{W,\alpha}^n(x^n) \Big) W_{x^n} \Big(I^{\otimes n} - \Pi_{W,\alpha}^n(x^n) \Big) \notag
        \\& \qquad \qquad \qquad \Pi_{W_T, \alpha\sqrt{|\mathcal{X}|}}^n\Big) \notag
        \\&\qquad \leq \mathrm{Tr}\Big(\Big(I^{\otimes n} - \Pi_{W,\alpha}^n(x^n) \Big) W_{x^n} \Big(I^{\otimes n} - \Pi_{W,\alpha}^n(x^n) \Big)\Big) \notag 
        \\&\qquad= 1 - \mathrm{Tr}(W_{x^n} \Pi_{W,\alpha}^n(x^n)) \leq \frac{\dim (\mathcal{H}) \mathcal{|X|}}{\alpha^2}
    \end{align}
    where the first equality follows that $\Pi_{W,\alpha}^n(x^n)$ commutes to $W_{x^n}$ and the last inequality follows from \eqref{eq:lem:Conditional_Typical_projector_1_Tr_rho_Pi}. 
    By substituting \eqref{eq_Tr_Q_1} and \eqref{eq_Tr_Q_2} to \eqref{eq_Tr_Q_0}, we get \eqref{eq_Tr_Q}. \qed

\bibliographystyle{plain}
\bibliography{sample}

\begin{thebibliography}{10}

\bibitem{Ahlswede06}
R.~Ahlswede.
\newblock On concepts of performance parameters for channels.
\newblock {\em IEEE Transactions on Information Theory}, 2006.

\bibitem{Alswede_Dueck89}
R.~Ahlswede and G.~Dueck.
\newblock Identification via channel.
\newblock {\em IEEE Transactions on Information Theory}, 35(1):15--29, 1989.

\bibitem{Ahlswede_Winter02}
R.~Ahlswede and A.~Winter.
\newblock Strong converse for identification via quantum channels.
\newblock {\em IEEE Transactions on Information Theory}, 48(3):569--579, 2002.

\bibitem{Anshu_Hayashi_Warsi_20}
A.~Anshu, M.~Hayashi, and N.~A. Warsi.
\newblock Secure communication over fully quantum gel’fand-pinsker wiretap channel.
\newblock {\em IEEE Transactions on Information Theory}, 66(9):5548 -- 5566, 2020.

\bibitem{Arora_Hazan_Kale12}
S.~Arora, E.~Hazan, and S.~Kale.
\newblock The multiplicative weights update method: A meta-algorithm and applications.
\newblock {\em Theory of Computing}, 8:121--164, 2012.

\bibitem{Ash65}
R.~B. Ash.
\newblock {\em Information theory}.
\newblock Dover Publications, 1965.

\bibitem{Atif_Pradhan_Winter_24}
T.~A. Atif, S.~S. Pradhan, and A.~Winter.
\newblock Quantum soft-covering lemma with applications to rate-distortion coding, resolvability and identification via quantum channels.
\newblock {\em International Journal of Quantum Information}, 22(5):2440013, 2024.

\bibitem{Blackwell_Breiman_Thomasian59}
F.~Blackwell, L.~Breiman, and A.~J. Thomasian.
\newblock The capacity of a class of channels.
\newblock {\em The Annals of Mathematical Statistics}, 30(4):1229--1241, 1959.

\bibitem{Cai_Winter_Yeung_04}
N.~Cai, A.~Winter, and R.~W. Yeung.
\newblock Quantum privacy and quantum wiretap channels.
\newblock {\em Problems of Information Transmission}, 40(4):318--336, 2004.

\bibitem{Cheng_Gao24}
H.-C. Cheng and L.~Gao.
\newblock Error exponent and strong converse for quantum soft covering.
\newblock {\em IEEE Transactions on Information Theory}, 70(5):3499--3511, 2024.

\bibitem{Cheng_Gao_Hirche_Huang_Liu_2025}
H.-C. Cheng, L.~Gao, C.~Hirche, H.-W. Huang, and P.-C. Liu.
\newblock Sharp estimates of quantum covering problems via a novel trace inequality, 2025.
\newblock arXiv:2507.07961.

\bibitem{Csiszár96}
I.~Csiszár.
\newblock Almost independence and secrecy capacity.
\newblock {\em Problemy Peredachi Informatsii}, 32(1):48--57, 1996.

\bibitem{Csiszar-Korner11}
I.~Csiszár and J.~Körner.
\newblock {\em Information theory Coding Theorems for Discrete Memoryless Systems}.
\newblock Cambridge University Press, 2011.

\bibitem{Devetak_05}
I.~Devetak.
\newblock The private classical capacity and quantum capacity of a quantum channel.
\newblock {\em IEEE Transactions on Information Theory}, 51(1):44--55, 2005.

\bibitem{Feinstein59}
A.~Feinstein.
\newblock A new basic theorem of information theory.
\newblock {\em Transactions of the IRE Professional Group on Information Theory}, 4(4):2--22, 1954.

\bibitem{Golden65}
Sidney Golden.
\newblock Lower bounds for the helmholtz function.
\newblock {\em Phys. Rev.}, 137:B1127--B1128, Feb 1965.

\bibitem{Han03_Information-Spectrum}
T.~S. Han.
\newblock {\em Information-Spectrum Methods in Information Theory}.
\newblock Splinger, 2003.

\bibitem{Han_Verdu93}
T.~S. Han and S.~Verdú.
\newblock Approximation theory of output statistics.
\newblock {\em IEEE Transactions on Information Theory}, 39(3):752--772, 1993.

\bibitem{Hayashi_Cheng_Gao_25}
M.~Hayashi, H.-C. Cheng, and L.~Gao.
\newblock Resolvability of classical-quantum channels.
\newblock {\em IEEE Transactions on Information Theory}, 71(8):6061--6074, 2025.

\bibitem{He_Atif_Pradhan_2024}
X.~He, T.~A. Atif, and S.~S. Pradhan.
\newblock Quantum soft covering and decoupling with relative entropy criterion.
\newblock In {\em 2024 IEEE International Symposium on Information Theory (ISIT)}, 2024.
\newblock arXiv:2402.11112.

\bibitem{Kale07}
S.~Kale.
\newblock {\em Efficient Algorithm Using Multiplicative Weight Update Method}.
\newblock {Ph.D.} dissertation, Princeton University, 2007.

\bibitem{Matsuura_Hayashi_Hsieh_2025}
T.~Matsuura, M.-H. Hayashi, and M.~Hsieh.
\newblock Universal classical-quantum channel resolvability and private channel coding, 2025.
\newblock arXiv:2510.02883.

\bibitem{Ogawa_Nagaoka07}
T.~Ogawa and H.~Nagaoka.
\newblock Making good codes for classical-quantum channel coding via quantum hypothesis testing.
\newblock {\em IEEE Transactions on Information Theory}, 53(6):2261--2266, 2007.

\bibitem{Shannon48}
C.~E. Shannon.
\newblock A mathematical theory of communication.
\newblock {\em Bell System Technical Journal}, 27(3):379--423, 1948.

\bibitem{Shen_Gao_Cheng24}
Y.-C Shen, L.~Gao, and H.-C. Cheng.
\newblock Optimal second-order rates for quantum soft covering and privacy amplification.
\newblock {\em IEEE Transactions on Information Theory}, 70(7):5077--5091, 2024.

\bibitem{takahashi2025channelresolvabilityusingmultiplicative}
K.~Takahashi and S.~Watanab.
\newblock Channel resolvability using multiplicative weight update algorithm.
\newblock In {\em 2025 IEEE International Symposium on Information Theory (ISIT)}, 2025.

\bibitem{Thompson_65}
Colin~J. Thompson.
\newblock Inequality with applications in statistical mechanics.
\newblock {\em Journal of Mathematical Physics}, 6(11):1812--1813, 11 1965.

\bibitem{Wilde_2018}
M.~M. Wilde.
\newblock {\em From Classical to Quantum Shannon Theory}.
\newblock Cambridge University Press, 2016.

\bibitem{Winter99}
A.~Winter.
\newblock Coding theorem and strong converse for quantum channels.
\newblock {\em IEEE Transactions on Information Theory}, 45(7):2481--2486, 1999.

\end{thebibliography}

\end{document}